\newtheorem{theorem}{Theorem}[section]
\newtheorem{definition}{Definition}
\title{Manipulation Can Be Hard in Tractable Voting Systems Even for Constant-Sized Coalitions}
\author{Curtis Menton and Preetjot Singh\footnote{Supported in part by grants NSF-CCF-0915792 and NSF-CCF-1101479.} \\ Department of
    Computer Science \\ University of Rochester \\ Rochester, NY 14627
    USA}
\begin{document}
 \maketitle

\begin{abstract}

Voting theory has become increasingly integrated with computational
social choice and multiagent systems. Computational complexity has
been extensively used as a shield against manipulation of voting
systems, however for several voting schemes this complexity may cause
calculating the winner to be computationally difficult. Of the many
voting systems that have been studied with regard to election
manipulation, a few have been found to have an unweighted coalitional
manipulation problem that is NP-hard for a constant number of
manipulators despite having a winner problem that is in P\@. We survey
this interesting class of voting systems and the work that has
analyzed their complexity.
\end{abstract}

\section{Introduction}
Research in voting theory has become increasingly important due to the
ubiquity of voting systems.  While voting is most typically used for
political or organizational elections, recently it has become
relevant as a tool in multiagent systems and distributed
artificial intelligence.  From recommender
systems~\cite{gho-mun-her-sen:c:voting-for-movies,
  gil-hor-pen:c:collaborative-filtering} such as those seen in
Netflix which makes recommendations based on user activity, to
consensus mechanisms for planning in artificial
intelligence~\cite{eph-ros:cOUTbyJOURNAL:clarke-tax} and search engine
and metasearch engine design~\cite{Lifantsev:votingmodel,
  dwo-kum-nao-siv:c:rank-aggregation}, mechanisms that aggregate
individual `votes' have far-reaching applications.
Many fields of study in computer science such as mechanism design and
algorithmic game theory have become intertwined with research in
voting theory. In this paper we concentrate on results in
manipulation, that is, strategically changing one's vote so as to
change the result of the election.  

Manipulation, as opposed to
other ways of influencing election results, does not change
the structure of the election (as in control~\cite{bar-tov-tri:j:control} and cloning~\cite{springerlink:10.1007/BF00433944,elk-fal-sli:c:clone}) or
involve an external actor bribing voters to change their preferences
(as in bribery~\cite{fal-hem-hem:j:bribery} and campaign management~\cite{elk-fal:j:approx-camp,sch-fal-elk:c:camp}).
Manipulation does not require going outside of the election model but merely
involves voters picking their optimal votes.  Thus manipulation is the
most immediate and frequently relevant of these problems.


The most representative subproblem of manipulation is unweighted
coalitional manipulation (UCM). The model is simpler than one with
weighted voters and so the focus is squarely on the voting rule
itself, not on the interplay of differently weighted voters
forming a coalition. As such, results in UCM are a purer test of a
voting system's vulnerability to manipulation.

We will first explore the history and key results of voting theory that
imply significant issues with all voting systems, and subsequently we
will show how we can cope with some of these difficulties through complexity theory.

\subsection{Voting Theory}




Voting has long been used as a tool for collaborative decision making,
with democratic government known to have existed at least as far back
as 6th century BCE in ancient Greece.  For nearly as long people have
studied voting in an attempt to find the best election methods and solve
problems related to voting.


%
%

One milestone in the study of voting is the work of 13th century
mystic Ramon Llull.  Llull, a prominent Franciscan, was involved with
the Catholic church and researched methods for electing church
officials. 
Among his extensive body of work, encompassing at least 265 titles
on subjects ranging from controversial theological viewpoints to
romantic fiction, Llull's contributions to voting theory
stem from three works: \textit{Artifitium Electionis Personarum},
\textit{En qual manera Natana fo eleta a abadessa} (Chapter 24 of his
novel \textit{Blaquerna}) and \textit{De Arte Eleccionis}, all
featuring variants of a pairwise election system we now know as
Condorcet voting~\cite{hagele:llull, szp:b:numbers}.




In the eighteenth century, the Marquis de Condorcet developed one of
the first criteria for evaluating voting systems. Condorcet's
criterion is that, for a given election, if there exists a candidate
that beats all other candidates in pairwise contests, then that
candidate must be the winner of the election. It is a somewhat
surprising result that such a candidate will not always exist due the
possibility of cycles in the pairwise societal preferences.  Condorcet
proposed this criteria and showed that many popular voting systems do
not meet it. Those that do are called Condorcet methods,
or Condorcet-consistent voting systems.  Elections using Condorcet
methods can be viewed as a number of pairwise majority-rule
elections\footnote{A majority rule decides on the alternative that receives the
  majority of the votes. } or in the case of an election with two
candidates, exactly equivalent to a majority-rule election. Hence the
literature often refers to Condorcet methods as variants of the
majority rule~\cite{risse:cordorcetborda}.  

Condorcet methods have often been contrasted with Borda voting, which
takes complete preference orderings as the votes and gives points to
each candidate based on the number of candidates they are ranked above
in each vote.  For instance a vote denoting $a$ is preferred to $b$ is
preferred to $c$ would give two points to $a$, one to $b$, and none to
$c$.  There are fervent arguments between proponents of the two systems, debating
the importance of the Condorcet criterion~\cite{newenhizen:bordarespectscondorcet,saari:Bordabetter,risse:cordorcetborda}, in a rivalry dating back to the Marquis de
Concorcet's criticism of Borda voting when it was first
introduced~\cite{szp:b:numbers}.

One persistent concern in elections is that some of the participants
may be able to vote strategically thus unfairly gaining an advantage over
honest voters.  
Pliny the Younger's attempts to manipulate the Roman senate circa 105
CE is possibly the earliest recorded instance of strategic behavior in
elections~\cite{szp:b:numbers}.  The senate, presiding over a murder trial, were
divided into three blocs: those who favored acquittal, banishment, or
death for the accused. The senators were more or less evenly
distributed among the three positions, with the acquittal bloc (headed
by Pliny) being slightly larger than the other two. The
normal method of voting was similar to the current justice system in
most countries: The senate would first vote on the guilt of the
accused, followed by a vote on the punishment (banishment or death).
Considering how the blocs were aligned, the probable outcome of the
first election would be a decision of guilty, followed by
banishment. To give his faction an edge, Pliny proposed the senators
vote for acquittal, banishment or death in a single ternary-choice
election. However, his strategy backfired. The death penalty faction,
fearing an acquittal, voted for banishment.

Pliny's story holds more than just strategy and counter-strategy:
Pliny convinced the senate of the fairness of a single ternary-choice
election by stating it aligned naturally with the principle of voting
\emph{qua sentitits}, or according to your true preferences.
His attempt proved unsuccessful but serves as an excellent example of
the problem of getting people to vote honestly. While this problem was
recognized by voting theoreticians through history, it was either
dismissed or attempts to solve it were limited at best.  For instance,
Llull documents that voters were required to give an oath to vote sincerely, and
Jean-Charles de Borda famously dismissed criticism of his system's
vulnerability to manipulation by saying ``My scheme is only intended for
honest men''~\cite{bla:b:polsci:committees-elections}.



Later work drew from game theory to more formally model voter strategy
and to analyze its possibilities, especially in the work of Allan
Gibbard~\cite{gib:j:polsci:manipulation} and Mark
Satterthwaite~\cite{sat:j:polsci:manipulation}.  We will first explore
the work of Kenneth Arrow, who initiated the modern study of voting
theory and introduced the election model that has become the standard.

\subsection{Arrow's Impossibility Theorem}
Arrow's seminal work in modern social choice theory~\cite{arrow,
  arr:b:polsci:social-choice} originated in an attempt to formalize an
aggregate function for social opinion. Aggregate mechanisms existed
previously in welfare economics: Called welfare functions, they
attempted to measure societal welfare for a number of alternative
possibilities by aggregating the utility or
welfare of individuals (measuring, for instance, the impact of  a change in
fiscal policy or tax rates).  These mechanisms all shared the assumption
that as subjective a concept as individual utility could be compared
or quantified. A breakthrough came with the Bergson-Samuelson
social welfare function~\cite{bergsonswf} which inspired Arrow's own
aggregate mechanism, also called a social welfare
function\footnote{Differences between the two functions are elaborated on throughout Arrow's paper~\cite{arrow}.}. Like the
Bergson-Samuelson model, Arrow broke from previous economic models by
considering an individual's vote to be their ranked preferences rather than a collection of 
numerical utilities over the alternatives.  Thus the output of the
social welfare function is a ranked ordering of the alternatives as
well.  Arrow argues that this is a more appropriate model for aggregate functions
due to the difficulty of interpersonal comparisons of utility.



Arrow's key result, known as Arrow's impossibility theorem, shows that
no social welfare function can satisfy all of a set of five reasonable
criteria whenever there are more than two alternatives.  By reasonable
criteria we mean these conditions ``\ldots accord with common sense
and with our intuition about fairness and the democratic
process"~\cite{szp:b:numbers}.  In formalizing his aggregate
mechanism, Arrow laid down two postulates that directed the
construction of individual preferences, and outlined the
aforementioned five characteristics.

The first postulate states that for any pair of alternatives $a,b$,
every individual will always have some opinion between them:
individuals can be indifferent between them (generally represented as
$aIb$ or $bIa$, denoting indifference between $a$ and $b$), or prefer
one alternative to the other (generally represented as $aPb$ in the
case that $a$ is preferred to $b$).  The second postulate is that an
individual's preferences must be transitive, thus disallowing cycles
in individual preference orderings. Note that this requirement is not
universally held in voting theory, and intransitive preferences are
sometimes considered reasonable when voters use different criteria to
decide between different pairs of
alternatives~\cite{hug:j:rational}. Consider the example of an
individual Jeff who has to rent a car, and is willing to pay a little
more for additional space. Between a compact and a midsize car, Jeff
always chooses the midsize, since he has to pay just a little more for
additional comfort. Similarly, between a midsize and a fullsize car,
Jeff prefers a fullsize car. But between a fullsize car and a compact,
Jeff finds the price difference to be too great, and chooses the
compact car instead.


Arrow defines five reasonable criteria for social welfare
functions: unrestricted domain, monotonicity, nonimposition,
independence of irrelevant alternatives and nondictatorship.


\paragraph{Unrestricted Domain}
By the two aforementioned postulates, an individual's preferences are
represented as an ordering complete over the set of alternatives. Any
restriction on which sets of orderings are permitted as input to the
function violates the democratic nature of the mechanism and would
fail to satisfy this criterion. Unrestricted domain would be violated
in the example of elections held in a despotic state where only votes
with the current ruler ranked first are allowed.

%

\paragraph{Nonimposition}
The second criterion states that the function should not allow inclusion or preclusion of
outcomes irrespective of the preferences of the electorate. This criterion implies the social outcome
should depend entirely on the set of individual preference orderings. An example of imposition could be an election in a theocracy where only candidates from the state religion are permitted to be elected.

\paragraph{Monotonicity}
The third property, monotonicity, states that an individual cannot harm an alternative's position by ranking it higher. In other words, if an aggregate preference ordering holds that alternative Ted is preferred to
alternative Jeff, then an individual cannot harm Ted's aggregate
position by ranking him above Jeff in his or her ordering, all other
individual orderings being constant. This property implies that that
the aggregate decision must be responsive to and representative of the
individual's preferences.

In the later version of his work, Arrow replaced monotonicity and
nonimposition with the combined property of the Pareto criterion, or
unanimity~\cite{arr:b:polsci:social-choice}. The Pareto criterion is
slightly stronger than monotonicity since it incorporates
nonimposition. It states that for any two alternatives $a$ and $ b$,
if an individual preference ordering prefers $a$ to $b$ with all other
individual preferences indifferent between these two alternatives,
then the social outcome also prefers $a$ to $b$.

\paragraph{Independence of Irrelevant Alternatives}
The fourth property, independence of irrelevant alternatives (IIA),
implies that individual preferences for any pair of alternatives
should not be influenced by other alternatives. A famous anecdote
attributed to Sidney Morgenbesser~\cite{poundstonegaming} illustrates
IIA: Morgenbesser, ordering dessert in a restaurant, was informed by
the waitress that the dessert choices were apple pie and blueberry
pie. Morgenbesser chose apple pie. A few minutes later, the waitress
returned and informed him that cherry pie was also available. ``In that
case," said Morgenbesser to the utter confusion of the poor waitress,
``I'll have blueberry."



IIA and the possibility of strategic behavior in a voting system are mutually exclusive:
the presence of one in a voting model indicates the absence
of the other. While any honest preference relation between a pair of
alternatives would not be influenced by extraneous alternatives,
strategic behavior often requires them. Consider a Borda election
between two alternatives Jeff and Mike where a certain voter prefers
Mike, the stronger candidate, to Jeff. A new candidate, Ted, is
introduced that the voter prefers to all others. The voter,
then, instead of his true preference ordering \mbox{Ted $>$ Mike $>$ Jeff} (where \mbox{Ted $>$ Mike} implies that Ted is preferred to Mike), might misrepresent his preferences as \mbox{Ted $>$ Jeff $>$ Mike}, in order to weaken Ted's strongest opponent, Mike. In other words, the introduction of Ted results in the voter switching his preferences for Jeff and Mike.

\paragraph{Nondictatorship} 
The fifth property states the function should not permit an individual who
is a dictator, i.e., for a given profile of individuals, the function
cannot reflect any one individual's preferences, irrespective of the
preferences of all others in the profile.

Arrow proved that any thusly defined acceptable social welfare
function, meeting all these criteria, cannot decisively aggregate the
preferences of a profile of individuals if there are more than two
alternatives. This result essentially meant that any
social welfare function violated the most basic thresholds for
acceptability, thus any such function would have to compromise on meeting at
least one of these five criteria. Arrow, in discussing
this problem opined that compromising on an unrestricted
domain was the only reasonable alternative~\cite{arrow}.

One of the more notable approaches to this problem actually preceded
Arrow's work: In 1948 Scottish economist Duncan Black wrote
about an intriguing property of societal preferences called
single-peakedness~\cite{blacksinglepeak, bla:b:polsci:committees-elections}. Consider
plotting a preference ordering with the horizontal axis representing a
linear ordering of alternatives and the vertical axis representing
their rank in the ordering. If the resulting curve (drawn from joining
all alternative-representing points together) has a single peak
(defined to be a point flanked by either lower-ranking points on both
sides, or only on one side if the peak starts or ends the curve)
then we can state that the preference ordering is single-peaked with respect to
the linear ordering on the horizontal axis. For a given profile
of preferences, if there exists at least one linear ordering such that
all votes are single-peaked with reference to this linear ordering,
then we pronounce the profile to be single-peaked, or admitting the
property of single-peakedness.

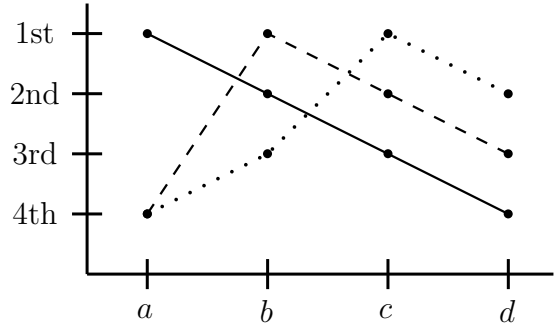
\begin{figure}[ht]
\center
\scalebox{1} 
{
\begin{pspicture}(0,-2.1425)(7.014375,2.1225)
\psline[linewidth=0.04cm](0.794375,2.1025)(0.794375,-1.4975)
\psline[linewidth=0.04cm](0.794375,-1.4975)(6.994375,-1.4975)
\psline[linewidth=0.04cm](1.594375,-1.6975)(1.594375,-1.2975)
\psline[linewidth=0.04cm](0.994375,-0.6975)(0.594375,-0.6975)
\psline[linewidth=0.04cm](3.194375,-1.2975)(3.194375,-1.6975)
\psline[linewidth=0.04cm](4.794375,-1.2975)(4.794375,-1.6975)
\psline[linewidth=0.04cm](0.994375,0.1025)(0.594375,0.1025)
\psline[linewidth=0.04cm](0.994375,0.9025)(0.594375,0.9025)
\rput(1.5675,-1.9875){$a$}
\rput(3.1785936,-1.9875){$b$}
\rput(4.762344,-1.9875){$c$}
\psline[linewidth=0.04cm](6.394375,-1.2975)(6.394375,-1.6975)
\rput(6.3789062,-1.9875){$d$}
\psline[linewidth=0.04cm](0.994375,1.7025)(0.594375,1.7025)
\psdots[dotsize=0.12](1.594375,1.7025)
\psdots[dotsize=0.12](3.194375,0.9025)
\psdots[dotsize=0.12](4.794375,0.1025)
\psdots[dotsize=0.12](6.394375,-0.6975)
\psline[linewidth=0.03](1.594375,1.7025)(3.194375,0.9025)(4.794375,0.1025)(6.394375,-0.6975)
\psdots[dotsize=0.12](3.194375,1.7025)
\psdots[dotsize=0.12](4.794375,0.9025)
\psdots[dotsize=0.12](6.394375,0.1025)
\psdots[dotsize=0.12](1.594375,-0.6975)
\psline[linewidth=0.03,linestyle=dashed,dash=0.16cm
  0.16cm](1.594375,-0.6975)(3.194375,1.7025)(4.794375,0.9025)(6.394375,0.1025)
\psdots[dotsize=0.12](6.394375,0.9025)
\psdots[dotsize=0.12](4.794375,1.7025)
\psdots[dotsize=0.12](3.194375,0.1025)
\psdots[dotsize=0.12](1.594375,-0.6975)
\psline[linewidth=0.05,linestyle=dotted,dotsep=0.16cm](6.394375,0.9025)(4.794375,1.7025)(3.194375,0.1025)(1.594375,-0.6975)
\rput(0.1,-0.6875){4th}
\rput(0.1,0.1125){3rd}
\rput(0.1,0.9125){2nd}
\rput(0.1,1.7125){1st}
\end{pspicture} 
}
\caption{A preference profile that is single peaked for the ordering
  $a b c d$.  The votes are $a > b > c > d$ (solid), $b > c > d > a$
  (dashed), and $c > d > b > a$ (dotted).}
\end{figure}

Black showed that aggregate functions admitting single-peaked
preference profiles (with respect to some linear ordering) meet all of
Arrow's criteria except for unrestricted domain. We lose this
criterion since a linear ordering that induces single-peakedness does not
exist for every preference profile\footnote{The existence of a
  single-peaked profile would (partially) depend on the set of
  least-preferred \mbox{alternatives} across all given orderings. For
  example, in the case of exactly three alternatives $a,b,c$, any
  profile of two or more preference orderings having a total of two
  alternatives $a,b$  ranked last, linear orderings
  $acb$ and $bca$ exist relative to which the profile is
  single-peaked. For methods to determine single-peakedness, we refer
  the reader to the work of Ballester and
  Haeringer~\cite{ballester:single-peaked} and Escoffier et
  al.~\cite{esc-lan-ozt:c:single-peaked-consistency}.}.
Therefore, if an aggregate function is to admit only single-peaked
input, it would  have to exclude certain preference orderings,
restricting the domain.

Another approach by Amartya Sen~\cite{senQuasi} showed the existence of
aggregate mechanisms that have all of Arrow's criteria but implement
a relaxation of transitivity called quasi-transitivity.  This permits
the existence of certain preferences that violate standard
transitivity---for example, for alternatives $a, b, c$ the
following preference profile is acceptable: $aIb$, $bIc$, $aPc$. Sen also
discusses an aggregate function where the input, instead of individual
preference orderings, is individual utility functions.

Brams and Fishburn showed that approval voting similarly has very
desirable properties when we restrict the preference
domain~\cite{nie:j:approval}.  In the case that voters have
dichotomous preferences (that is, they can divide the candidates into
two groups: one they they approve of and one they do not), approval
voting has many positive properties, including immunity to strategic
voting and the Condorcet criterion~\cite{bra-fis:j:approval}.  In the
general case with unrestricted preferences, the system no longer has
these properties~\cite{nie:j:approval}.

The impact of Arrow's work can be summed up in a quote attributed to
Paul Samuelson~\cite{poundstonegaming}, ``What Kenneth Arrow proved
once and for all is that there cannot possibly be \ldots an ideal voting
scheme.'' The Gibbard-Satterthwaite theorem held even more dismal news:
In addition to being less-than-ideal, all voting schemes are also
vulnerable to manipulation, unless they admit dictators.

\subsection{Gibbard-Satterthwaite Theorem}
In the 1970s Alan Gibbard~\cite{gib:j:polsci:manipulation} and Mark
Satterthwaite~\cite{sat:j:polsci:manipulation} independently extended Arrow's theorem for voting systems in a model that incorporated strategic misrepresentation of preferences. They proved that no \emph{strategy-proof} voting system existed for elections with three or more alternatives, unless the voting system allowed dictators. A strategy-proof voting system is one where no manipulating strategy can exist in elections using this voting system. While this result revolutionized voting theory, it had been speculated previously: In 1960, William Vickrey, when discussing individuals strategically misrepresenting their preferences in Arrow's model~\cite{vickrey:gsconjecture}, stated that ``it is clear that social welfare functions that satisfy the non-perversity [monotonicity] and the independence [IIA] postulates and are limited to rankings as arguments, are also immune to strategy." In addition, the Dummett-Farquharson conjecture of 1961~\cite{dummettfarquharsonstability} parallels the Gibbard-Satterthwaite theorem. 

The Gibbard-Satterthwaite theorem transformed voting theory for two reasons: one was the aforementioned result that nondictatorial voting rules were susceptible to strategic voting (manipulation) in cases with three or more outcomes,  and the second was the adoption of the arcane science of game theory from Neumann
and Morgenstern's \textit{Theory of Games and Economic Behavior}~\cite{neumanngames}, published just four years prior to
Arrow's work. While the influence of game theory was implied (and acknowledged) in Arrow's social welfare
function~\cite{wilsonarrowgt}, the Gibbard-Satterthwaite theorem proofs were more explicit in their treatment of voting
functions as game-theoretic mechanisms. This provided the field of voting theory with a set of tools to examine a whole range of scenarios---for example, the motivations for the electorate to vote dishonestly, or their reaction to changes in the structure of the voting rule, or attempts to form coalitions to further their individual utility. 

In Gibbard's work, a \emph{voting scheme} or \emph{social choice
  function} is built upon a construct called a \emph{game
  form}~\cite{gib:j:polsci:manipulation}.  A game form is similar to a
game construct in game theory~\cite{neumanngames} applied to a voting
model: players are voters, and any player's strategies are the set of
all possible orderings of preferences (unless restricted for specific
scenarios), over a set of alternatives or candidates. Game forms,
unlike games, do not have functions assigning utilities for each
player for a given action (or chosen strategy) or for a scenario of
chosen strategies of all the players. Instead the social choice
function has the concept of an honest or sincere strategy: Among the
set of strategies for each player is a specially marked strategy
denoting the honest representation of preferences for that
player. This can be used to compare outcomes (which may be a
preference ordering or a single alternative) for a voter's different
strategies, to see if one is less or more preferable to another in the
honest ranking of alternatives.

This social choice function will be immune to manipulation only if each
voter has a \emph{dominant strategy}, a strategy that will be at least
as good as any other for that voter no matter what any other voter
does.  Otherwise, if a voter does not have a dominant strategy, then
they might possibly be motivated to change their vote from their true
preferences in order to obtain a better outcome.  A social choice function where
each voter has a dominant strategy (and hence is immune to manipulation) is said to be \emph{straightforward}.

The proof of the Gibbard-Satterthwaite theorem relies on the Vickrey
conjecture: a voting scheme (defined with the property of unrestricted
domain) is strategy-proof if and only if it has the properties of IIA and unanimity. Since Arrow showed that no aggregate function with more than two outcomes can satisfy all of his model's criteria, such a voting system then is necessarily a dictatorship.
In other words, we have that any voting system with at least three outcomes will either be a
dictatorship or it will be manipulable. A similar result, the Duggan-Schwartz theorem~\cite{dug-sch:j:polsci:gibbard}, exists for  voting systems that elect multiple candidates.

The Gibbard-Satterthwaite theorem presents a problem and accepts a
solution similar to Arrow's theorem. By Black's results, voting
schemes that permit only single-peaked preferences restrict the domain
of the function, but can have all other Arrow criteria, including
unanimity and IIA, which together imply strategy-proofness. Thus,
relaxing the condition of unrestricted domain is necessary for voting
schemes that resist manipulation. Another example of this is Gibbard's
development of probabilistic
mechanisms~\cite{gibbard:relaxingunrestricteddomain:chance,
  gibbard:relaxingunrestricteddomain:lotteries}.  Procaccia took a
similar approach by designing strategy-proof probabilistic voting
systems that are similar to standard deterministic voting systems~\cite{DBLP:conf/aaai/Procaccia10}.

\subsection{Computational Difficulty of Manipulation}






Another solution to the problem of inherent manipulability in voting
was proposed by Bartholdi along with Tovey, Trick, and Orlin in a
series of papers that started the field of
computational social choice~\cite{bar-tov-tri:j:manipulating,bar-tov-tri:j:who-won,
  bar-oli:j:polsci:strategic-voting}.
Their approach was to select voting schemes
where manipulation is computationally difficult to carry out,
i.e. where the manipulation problem is NP-hard. Our definition of the
manipulation problem is that of constructive coalitional manipulation:
i.e., does there exist a set of votes for the manipulating coalition
that causes their preferred candidate to win the election? This
subsumes the case of a single manipulator and contrasts with
destructive manipulation, which is concerned with preventing a certain
candidate from winning.

Bartholdi et al.'s initial
work  also highlighted the problems of
selecting systems with complexity. Their impracticality
theorem~\cite{bar-tov-tri:j:who-won} is another instance of systems
meeting seemingly reasonable criteria thereby inducing  undesirable
properties.

%
The theorem states that any \emph{fair} voting system requires excessive computation to determine the
winner, making it impractical---a highly disturbing result.
This theorem followed work by Kemeny~\cite{kem:j:no-numbers},  Young and Levenglick~\cite{lev-you:j:condorcet} and Gardenfors~\cite{gardenfors:manipscfprecedesimpracticalitytheorem}. 

According to Bartholdi et al., a voting system is \emph{fair} if it
meets the Condorcet criterion, the condition of \emph{neutrality}
(symmetry in its treatment of candidates, implied by
IIA~\cite{gal-pat-pen:j:arrow}), and the condition of
\emph{consistency} (if disjoint subsets of the voters voting
separately arrive at the same preference ordering, then voting
together always produces this same preference ordering as well). Their
theorem states that computation of the winner in any such fair voting
system is NP-hard.

The previously mentioned property of consistency (also called
convexity~\cite{woodall:convexityproperty} and
separability~\cite{smith:separability}) has been proven to be present
in ranked voting systems (those in which a vote is a ranking or
ordering of preferences) only if they happen to be scoring protocols
as well (where alternatives receive a certain number of points
depending on their position in the
ordering)~\cite{young:consistencypositionalpointscoring}. Scoring
protocols are often incompatible with the Condorcet criterion (refer
the aforementioned debate on Condorcet versus Borda) thus
unsurprisingly so far we know of only one voting system, Kemeny
scoring~\cite{kem:j:no-numbers}, that meets all three
conditions~\cite{lev-you:j:condorcet}.
Kemeny scoring then, is NP-hard, but additionally it has been shown to
be complete for parallel access to
  NP~\cite{hem-spa-vog:j:kemeny}.


However, this does not imply other voting systems are immune to having
an intractable winner problem: systems such as Dodgson meet only two
of the three fairness conditions---that of the Condorcet criterion
and neutrality but not consistency---however computation of the winner
in Dodgson is known to be not only
NP-hard~\cite{bar-tov-tri:j:who-won} but complete for parallel access
to NP~\cite{hem-hem-rot:j:dodgson}, similar to the result for Kemeny scoring.
 
Research focusing on tractable voting systems\footnote{A voting system
  is tractable if calculating the winner takes at most polynomial time.} was more
promising: While Bartholdi et al. gave us a greedy algorithm that
finds a manipulating vote for several tractable voting systems in
polynomial time~\cite{bar-tov-tri:j:manipulating}, two voting systems---second-order Copeland and single
transferable vote~\cite{bar-tov-tri:j:manipulating,
  bar-oli:j:polsci:strategic-voting}---proved to be resistant and were
shown to have a manipulation problem that is NP-hard. Research in this
field remained dormant for the next fifteen years until a revival
starting in 2006 brought about results for most common tractable
voting systems.

In this paper we are concerned with a restricted version of the manipulation
problem.  We survey tractable voting systems that resist manipulation in the
unweighted coalitional manipulation (UCM) model with only a constant
number of manipulators.  
This limited case subsumes hardness results in the weighted coalitional manipulation (WCM) model or with
variably-sized coalitions, thus making a case for UCM being a stronger class of manipulation. 
We include both the initial work of Bartholdi, Tovey, and Trick~\cite{bar-tov-tri:j:manipulating} and
Bartholdi and Orlin~\cite{bar-oli:j:polsci:strategic-voting} that
achieved the first results in this area and the recent resurgence of
interest in this problem that has resulted in a number of new
outcomes.

\subsection{Terms Defined}

An \emph{election} is defined to be an instance of a voting system,
comprising a voting rule $vr$, a set of candidates $C$ and a set of
votes $V$.  A \emph{voting rule} is a function that takes as input a
set of votes and a set of candidates and outputs a set of winners.
Unless explicitly stated otherwise, references to $n$ and $m$ are
defined as follows: $n = ||V||$ and $m = ||C||$. A \emph{vote} is
defined to be a linear ordering over the set of candidates.  The
\emph{advantage} of a candidate $c_{i}$ over $c_{j}$ (hereafter
referred to as $adv(c_{i},c_{j})$) is the number of votes that rank
$c_{i}$ ahead of $c_{j}$, with values ranging from $0$ to $n$. The
\emph{net advantage} of a candidate $c_{i}$ over $c_{j}$ is
$adv(c_{i},c_{j}) - adv(c_{j},c_{i})$, with values ranging from $-n$
to $n$.  A $netadv$ score between two candidates $c_{i}$ and $c_{j}$
is represented as $netadv(c_{i},c_{j})$.
By definition we can see that
$netadv(c_{i},c_{j}) = - netadv(c_{j},c_{i})$, thus one $netadv$
score can represent both directions.

UCM (unweighted coalitional manipulation) is defined to be a decision
problem as follows.
\begin{description}
\item[Given] An election, namely, a voting rule $vr$, a set of voters
  $V$ such that $V = V_{NM} \cup V_{M}$, where $V_{M}$ is the subset
  of voters that form the manipulating coalition and $V_{NM}$ is all
  other voters, and a set of candidates $C$ containing a
  distinguished candidate $c$.
\item[Question] 
   Does there exist a set of votes for $V_{M}$ such
that $vr$ over the complete set of votes yields $c$ as the winner?
\end{description}

 We use the format UCM$_{2Cope}$ to refer to the UCM problem in
 second-order Copeland and likewise for other election systems.

\section{UCM in Single Transferable Vote}

Single transferable vote (henceforward STV) is a voting system
with a long history.  As esteemed a figure as John Stuart Mill said it
was ``among the greatest improvements yet made in the theory and
practice of government.'' It determines the winners with a simple
multiround procedure that redistributes votes placed for less
popular candidates.  Also, unlike many of the
esoteric voting systems studied in voting theory, STV has a
history of being used for real-world political elections, in the
United States and around the world.  

The STV vote tallying procedure is as follows.  Give a point to each
candidate for each first-place vote it receives.  If any
candidate is the majority winner (i.e.\ with more than half the total
points), that candidate will be the only winner of the election.  If no
majority winner exists, then select the candidates with the fewest
number of points, remove them from consideration, and for the voters
who currently give their support to these candidates, reallocate
their support by giving their points to the next-highest ranked candidate
on their ballots still under consideration.  Repeat this procedure
until a winner is chosen or all candidates are removed.  If the latter
occurs, then all of candidates that were removed in the last round
will be the winners.

The complex, shifting behavior of STV with multiple candidates is what
gives it the resistance to manipulation we discuss here, but it also
leads to STV failing to possess some very desirable voting system
characteristics.  Notably it does not possess the property of
monotonicity: It is possible for a voter to increase his or her ranking of
a candidate and for that candidate to subsequently do worse in the
election.  This was enough for Doron and Kronick~\cite{perverse:stv}
to refer to it as a ``perverse social choice function,'' and it
certainly is a flaw of concern.

\subsection{STV is Resistant to Manipulation}


We show resistance to manipulation through a conventional, if
difficult, reduction based on Bartholdi and Orlin's work~\cite{bar-oli:j:polsci:strategic-voting}.
Their work was actually directed towards showing that
the EFFECTIVE PREFERENCE problem is NP-complete. EFFECTIVE PREFERENCE
is the problem of finding if a single voter can cause the preferred
candidate to win an election.  This is effectively the same as UCM
with a single manipulator, and we will prove that this problem is
NP-hard for STV with a proof based on the aforementioned
work~\cite{bar-oli:j:polsci:strategic-voting}.  The proof is
structured as a reduction from the exact cover by three-sets problem.

\noindent
\textbf{Exact Cover by Three-Sets (X3C)}

\label{sec:x3cdef}
\begin{description}
\item[Given] A set $D = \{d_1, \ldots, d_{3k}\}$ and a family ${\cal
  S} = \{S_1, \dots, S_n\}$ of sets of size three of elements from $D$.
\item[Question] Is it possible to select $k$ sets from ${\cal S}$
  such that their union is exactly $D$?
\end{description}

In other words, the goal of the problem is to find if there is an
appropriately-sized set of subsets which covers each of the elements
in $D$.  Since each $S_i$ has exactly three elements and $k$ such sets
from ${\cal S}$
must be chosen, the set of subsets must be an exact cover with no
repeated elements in all subsets chosen.

\begin{proof}

We will describe a reduction from an instance of X3C to a instance
of the unweighted manipulation problem for STV\@.  Note that since the reduction
will only require a single manipulator, this shows that UCM$_{STV}$ is NP-hard even for only a single
manipulator.

Given an instance of X3C $(D, {\cal S})$ we construct the
election as follows.  Let the following comprise the candidate set
$C$:

\begin{itemize}
\item The possible winners $c$ and $w$;
\item The set of ``first losers'' $a_1, \dots, a_n$ and
  $\overline{a}_1, \dots \overline{a}_n$, one of each corresponding to
  each subset $S_i$;
\item The ``second line'' $b_1, \dots, b_n$ and
  $\overline{b}_1, \dots, \overline{b}_n$, one of each corresponding to
  each subset $S_i$;
\item The $w$-bloc $d_0, \dots, d_{3k}$, each of whose voters will just
  prefer them to $w$;
\item The ``garbage collector'' candidates $g_1, \dots,  g_n$.
\end{itemize}

We will now describe the set of voters.  Where we use ellipses, the
remainder of a vote is arbitrary for our purposes and will not effect
the result of the election.

\begin{itemize}
\item $12n$ voters with preferences $(c, \dots)$;

\item $12n-1$ voters with preferences $(w,c,\dots)$;

\item $10n + 2k$ voters with preferences $(d_0, w, c, \dots)$;

\item For each $i \in \{1, \dots, 3k\}$, $12n - 2$ voters with
  preferences $(d_i, w, c, \dots)$;

\item For each $i \in \{1, \dots, n\}$, $12n$ voters with preferences
  $(g_i, w, c, \dots)$;

\item For each $i \in \{1, \dots, n\}$, $6n + 4i - 5$ voters with
  preferences $(b_i, \overline{b}_i, w,c, \dots)$ and for the three $j$
  such that $d_j \in S_i$, 2 voters with preferences $(b_i,
  d_j, w, c, \dots)$;

\item For each $i \in \{1, \dots, n\}$, $6n + 4i - 1$ voters with
  preferences $(\overline{b}_i, b_i, w, c, \dots)$ and 2 voters with
  preferences $(\overline{b}_i, d_0, w, c, \dots)$;

\item For each $i \in \{1, \dots, n\}$, $6n + 4i - 3$ voters with
  preferences $(a_i, g_i, w, c, \dots)$, 1 voter with preferences $(a_i, b_i,
  w, c, \dots)$, and 2 voters with preferences $(a_i, \overline{a}_i, w, c, \dots)$;

\item For each $i \in \{1, \dots, n\}$, $6n + 4i - 3$ voters with
  preferences $(\overline{a}_i, g_i, w, c, \dots)$, 1 voter with preferences
  $(\overline{a}_i, \overline{b}_i, w, c, \dots)$, and 2 voters with
  preferences $(\overline{a}_i, a_i, w, c, \dots)$.
\end{itemize}

This reduction works by requiring the elimination order of a subset of
the candidates to correspond to an exact cover over $B$ in order for
$c$ to win the election.  Namely, $c$ will win the election if and
only if \mbox{$I =
\{i ~|~ b_i $ is one of the first $3n$ candidates to be eliminated$\}$} is an exact cover.  Furthermore, there is a preference order for a
single manipulator that will force $I$ to be an exact cover if one
exists.  We will now consider the relevant properties of the election
and show that this is the case.

Since this election is a single-winner STV election with more than two
candidates, the scoring process will proceed for a number of rounds
and a number of candidates will be eliminated as the rounds progress.
The first $3n$ candidates to be eliminated will be $a_1, \dots, a_n$,
$\overline{a}_1, \dots, \overline{a}_n$, and exactly one of $b_i$ or
$\overline{b}_i$ for every $i \in \{1, \dots, n\}$.

Candidate $c$ initially has $12n$ votes, while $c$'s primary rival $w$
has $12n-1$ votes.  Every voter that does not have $c$ as their first
choice ranks $c$ directly below $w$, and so $c$ can only gain more
votes if $w$ is eliminated.  In order to do so, the manipulator must
ensure that $w$ does not gain additional votes before it is
eliminated, as otherwise $w$ would have gained votes against $c$ and
$c$ would have been eliminated first.  The manipulator must
consequently make sure that no candidate is eliminated such that any
voter most prefers that candidate and prefers $w$ second-most.  This
is the case with every voter that prefers one of the $d_j$ candidates,
so $w$ will gain a large number of votes if one of them is eliminated.
Therefore if any $d_j$ candidate is eliminated before $w$, $c$ cannot
possibly win.  For every $b_i$ candidate that is eliminated,
$\overline{b}_i$ gains a large number of votes, pushing it higher than
$12n$ in score and preventing it from being eliminated early. Also,
every $d_j \in S_i$ gains two votes, pushing them high
enough to prevent their early elimination.  Thus eliminating $b_i$
protects the $d_j$ candidates associated with the set $S_i$.
Conversely, for every $\overline{b}_i$ that is eliminated, $d_0$ gains
two votes and $b_i$ gains a large number of votes, preventing $b_i$
from being eliminated early.  Thus for every $i$ only one of $b_i$ or
$\overline{b}_i$ can be eliminated before $c$ or $w$.

The $a$ candidates are the other candidates that can be eliminated
early.  For every $a_i$ that is eliminated, $\overline{a}_i$ gains two
votes, $b_i$ gains one vote, and $g_i$ gains the rest of $a_i$'s votes.
The effect of this is that now $\overline{a}_i$ has been promoted
above $b_i$ and $\overline{b}_i$ in the overall ranking, and since
$b_i$ also gained a point over $\overline{b}_i$, $\overline{b}_i$ will
be the next to be eliminated instead of $b_i$.  Thus by controlling
which of $a_i$ or $\overline{a}_i$ is eliminated first, we control
which of $b_i$ or $\overline{b}_i$ is eliminated early.

Hence we can show that the candidate $c$ will win the election if and
only if $I = \{i
~|~ b_i $ is one of the first $3n$ candidates to be
  eliminated$\}$ is an exact cover.  We know that either $b_i$ or
$\overline{b}_i$ will be among the first $3n$ eliminated candidates.
If $b_i$ is the one eliminated, then every $d_j \in S_i$ will
gain two votes and will each have at least $12n$ votes total, protecting
them from early elimination.  If $I$ is an exact cover, this will be true
for every $d_j$ as each of them is covered by some selected $b_i$ and
so each of them will win over $w$.  Also, since $d_0$ gains two points
for every one of the $\overline{b}_i$ eliminated, $d_0$ will gain at
least $2(n-k)$ votes and will receive at least $12n$ votes overall,
pushing it over the score of $w$ as well.  Thus after the first $3n$
candidates have been eliminated, $w$ will have the least score with
$12n-1$, having gained no votes other than it's initial first-place
votes, and it will be the next candidate to be eliminated.  The
candidate $c$ will then gain a large number of votes from the
elimination of $w$ and will go on to win the election.  

If the set $I$ defined above does not correspond to an exact cover, $c$
cannot win the election.  If $I$ is not an exact cover, some candidate
$d_j$ will not gain the two points from a corresponding $b_i$ being
eliminated.  Thus $d_j$ will only have $12n-2$ votes after
the first $3n$ candidates are eliminated while the other remaining
candidates have at least $12n-1$, leaving $d_j$ as the next candidate to be
eliminated.  The candidate $w$ then gains points from $d_j$'s elimination,
preventing $c$ from gaining points against $w$ and winning the election.

If an exact cover exists, a single manipulator can construct it's
preference order as follows to ensure $c$ is a winner.  For an exact
cover $I$, if $i \in I$, let the $i$th candidate in the preference
order be $a_i$ and otherwise $\overline{a}_i$.  The rest of the
preference order is arbitrary.  This will result in the following
order of elimination of the first $3n$ candidates: For $i \in I$,
$\overline{a}_i, b_i, a_i$ will be eliminated in that order in
positions $3i-2$, $3i-1$, and $3i$.  For $i \notin I$, $a_i,
\overline{b}_i, \overline{a_i}$ will instead be the candidates to be
eliminated.  For $i \in I$, since this preference ranks $a_i$ over
$\overline{a}_i$, $a_i$ will gain one more vote and thus
$\overline{a}_i$ will be eliminated first.  This then gives two votes
to $a_i$ and one more to $\overline{b}_i$, making $b_i$ the next
least-preferred candidate.  When $b_i$ is eliminated, $\overline{b}_i$
gains a large number of votes, so $a_i$ is now the least-preferred
candidate and is eliminated next.  The rounds proceed similarly in the
case that $i \notin I$ and $\overline{a}_i$ is preferred instead.
Thus the set $\{i ~|~ b_i \text{ is one of the first } 3n \text{
  candidates to be eliminated} \}$ will correspond to an exact cover
and $c$ will win the election.

If no exact cover exists, no matter how a manipulator votes, at least
one of the $d_j$ candidates will not receive the protective boost from
the elimination of the corresponding $b$ candidate as previously
described.  This candidate will then be  eliminated early, leading to
$w$ being boosted past $c$ in votes and preventing $c$ from winning.

Thus even just setting the first $n$ rankings for the ballot of a
single manipulator for STV is NP-hard, and the system is resistant
even to
this very limited case of manipulation.
\end{proof}

\section{UCM in Borda Voting}

Borda voting is a classic voting system dating back at least to the
eighteenth century.  It was introduced by the French mathematician and
engineer Jean-Charles de Borda to remedy the failure of plurality in
reflecting the wishes of the electorate when used with more than two
candidates:  In plurality the candidate with the most votes is not
necessarily preferred to all other candidates.  Borda voting is very
similar to a system introduced in the 15th century by Cardinal
Nicolaus Cusanus~\cite{szp:b:numbers}.

It has a rich and varied history of real-world use: In some form it has
been used in political elections in Slovenia and the Micronesian
countries of Kiribati and Nauru, in the Eurovision contest, the
election of the board of directors of the X.Org foundation, and even
in sports, in the election of the Most Valuable Player award in Major
League Baseball. Borda is one of a class of systems known as scoring
protocols, where each vote awards points to each candidate depending
on their ranking in the vote. The winners are candidates with the
highest sum of points over all the votes.  In the case of Borda
voting, candidates receive linearly descending points for
progressively less favorable positions in the votes, with the top
position awarding $m-1$ points, the next position awarding $m-2$
points, and so on down to $0$ points for the lowest position.

It was long an open problem whether UCM$_{Borda}$ was hard in
general, though manipulation with a single manipulator has long been
known to be easy~\cite{bar-tov-tri:j:manipulating}. A greedy
algorithm that can find a set of successful manipulating votes in polynomial time that is at most one larger than the optimum
manipulative coalition size is known as well~\cite{pro-ros-zuc:j:borda}.  Recently,
Betzler et al.~\cite{bet-nie-woe-c-borda} and Davies et
al.~\cite{dav-kat-nar-wal:j:borda} proved Borda-manipulation to
be NP-hard for instances with two or more manipulators.

\subsection{Borda is Resistant to Manipulation }
Both Betzler et al.~\cite{bet-nie-woe-c-borda} and Davies et al.~\cite{dav-kat-nar-wal:j:borda}  prove their results by reduction from
the problem of 2-numerical matching with target sums, a known NP-hard
problem~\cite{yu-2nmts} that closely corresponds to the problem of
allocating points to the nonfavored candidates in the election.

\noindent
\textbf{2-Numerical Matching with Target Sums (2NMTS)}
\begin{description}
\item[Given] A sequence $a_1, \dots, a_k$ of positive integers with
  $\sum_{i=1}^{k} a_i = k(k+1)$ and $1 \leq a_i \leq 2k$.
\item[Question] Are there  permutations $\psi_1$ and $\psi_2$ of
  $1,\dots, k$ such that $\psi_1(i) + \psi_2(i) = a_i$ for $1 \leq i
  \leq k$?
\end{description}

\noindent
\textbf{Preliminaries} For manipulation to work, nonfavored candidates
must be ranked low enough in manipulative votes such that the number
of points they gain by said votes do not prevent the preferred
candidate from winning. To that end we define the \emph{gap} to be the
maximum number of points nonfavored candidates can gain by
all manipulative votes while still allowing the preferred candidate to
win. In any Borda instance with a favored candidate $c^*$, $m$ other
candidates, and $t$ manipulators, the gap $g_i$ for a candidate $c_i$
is $score(c^*) + t \cdot m - score(c_i)$. Here $score(c)$ refers to the
Borda score for the candidate $c$ over the nonmanipulative votes.  We
assume these gap values $g_1,\dots,g_m$ to be ordered in a
nondecreasing fashion.

\noindent
\textbf{Result 1} Recall that the awarded points for the last $j$
candidates in a vote will range from $j-1$ to $0$, hence the sum of their
points equals $j(j-1)/2$. Thus for any successful manipulation
instance, we must have that $\sum_{i=1}^{j} g_i \geq t \cdot j(j-1)/2$
for each $j \in \{1,\dots,m\}$. We define an instance to be
\emph{tight} if $\sum_{i=1}^{j} g_i = t \cdot j(j-1)/2$. Thus, in a
tight instance, for manipulation to be successful, the number of
points a nonfavored candidate gains from manipulative votes must be 
exactly equal to its gap value.

\begin{proof}
 Given any instance of 2NMTS we construct a UCM$_{Borda}$ instance
 $(C,V, p)$ as follows.  The candidate set $C$ consists of candidates
 $c_{1}, \dots, c_{k}$ and the preferred candidate $p$, and thus the
 range of Borda points is from $0$ to $k$.  The set of votes $V$
 consists of a manipulating coalition of size two and a set of
 nonmanipulating votes of size three. The instance is constructed such
 that the gap $g_{i}$ for any nonfavored candidate $c_i$ is $2k -
 a_{i}$. In this context, the Borda problem can be considered as
 follows: for every nonfavored candidate $c_{i}$, can we assign a
 position in each of the manipulating votes such that the points $c_i$
 gains from said votes is $\leq g_{i}$? This constructed instance of
 Borda manipulation will have a solution if and only if the 2NMTS
 instance has a solution.

\noindent
\textbf{Direction 1} Given a solution to 2NMTS, we can obtain a
solution for the Borda instance as follows: Preferred candidate $p$ is
placed in the first position in both manipulating votes. A solution to
2NMTS exists so we have two orderings $\psi_1(i), \psi_2(i)$ such that
$\psi_1(i) + \psi_2(i) = a_{i}$. For every candidate $c_{i}, (1 \leq i
\leq k)$ set its position to $\psi_1(i) + 1$ in the first
manipulative vote, and $\psi_2(i) + 1$ in the second manipulative
vote. The corresponding Borda points are obtained from subtracting
this position number from $||C|| = k+1$. Therefore the points $c_{i}$ has gained
 from both manipulative votes is $(k+1-(\psi_1(i) + 1)) + (k+1
- (\psi_2(i) + 1))$ which equals $2k - a_{i} = g_{i}$, permitting $p$
to win.


\noindent
\textbf{Direction 2} Given a solution to the Borda instance, we have a
solution for 2NMTS as follows: By construction,
$\displaystyle\sum\limits_{i=1}^k g_{i} =
\displaystyle\sum\limits_{i=1}^k (2k - a_{i})$.

 Since
$\displaystyle\sum\limits_{i=1}^k a_{i} = k(k+1)$,
$\displaystyle\sum\limits_{i=1}^k (2k - a_{i}) = k(k-1)$.

 Hence,
this Borda instance is tight for $j=k$ and $t = 2$ (from Result 1) and
consequently each nonfavored candidate $c_{i}$ gains exactly $g_{i}$
points from the manipulative votes. If $pos_{i}(1)$ and $pos_{i}(2)$ are
the positions for $c_{i}$ in the two manipulative votes, points gained from these positions total $(k+1
- pos_{i}(1)) + (k+1 - pos_{i}(2)) = g_{i} = 2k - a_{i}$, which yields
$pos_{i}(1) + pos_{i}(2) = a_{i} + 2$. Therefore setting $\psi_1(i) =
pos_{i}(1) -1$ and $\psi_2(i) = pos_{i}(2) -1$ gives us a solution for 2NMTS.

As mentioned earlier, we assume by construction $g_{i} = 2k - a_{i}$.
This requires constructing the set of nonmanipulative votes such that
the deficits for each candidate relative to the preferred candidate
map precisely to the target sums in the original problem. Executing
this involves complicated construction and the addition of a large
number of ``dummy'' candidates to pad out the remaining positions and
precisely set the required deficits for the primary candidate
set~\cite{bet-nie-woe-c-borda} (in Davies et al., such padding is done
with voters~\cite{dav-kat-nar-wal:j:borda}).  Thus the constructed
instance has a much larger candidate set than a voter set (though it
remains polynomially bounded), but it suffices to prove the desired
hardness result.
\end{proof}

\section{UCM in Copeland Elections}

Copeland voting is a voting system with a long history.  One version
of the system was discovered by the 13th century mystic Ramon Llull,
and then another variation was discovered by A.H. Copeland in the
1950s.  It is a Condorcet voting system.

Copeland voting is in fact a family of voting systems, parametrized on
how ties are handled.  In Copeland$^\alpha$, the score of a candidate
$c$ in an election $E$ is wins$_E(c)$ + $\alpha \cdot $ties$_E(c)$, where
wins$_E(c)$ denotes the number of pairwise victories and ties$_E(c)$,
the number of ties of the candidate $c$ in the election $E$.  Llull's
system is Copeland$^1$, while ``Copeland voting'' has been used to
describe Copeland$^{0.5}$ or to describe Copeland$^0$.  Different
parameter values subtly alter the behavior of the system and
complicate the task of analyzing its computational properties, as we
will explore.



\subsection{Copeland is Resistant to Manipulation for Most Values of
  {\boldmath $\alpha$}}
UCM$_{Cope^{\alpha}}$ is in P when there is only one
manipulator~\cite{bar-tov-tri:j:manipulating}, but it is known to be
resistant to manipulation even with two manipulators for $\alpha \in
[0,0.5) \cup (0.5, 1]$~\cite{fal-hem-sch:c:manip-copeland,
  fal-hem-sch:c:copeland-ties-matter}. The complexity of UCM$_{Cope^{\alpha}}$ for $\alpha = 0.5$ remains unknown as of date. Different proofs were required to show resistance to manipulation for different parameter ranges, as
the behavior of the system changes in subtle but significant ways with
different values of the parameter.  We will describe the proof that
was used to show hardness for Copeland$^\alpha$ for $\alpha \in
\{0,1\}$~\cite{fal-hem-sch:c:manip-copeland}.
Both  these cases were proved by Faliszewski et
al.~\cite{fal-hem-sch:c:manip-copeland} through similar reductions
from X3C.

\noindent
\textbf{Exact Cover by Three-Sets (X3C)}

\begin{description}
\item[Given] A set $D = \{d_1, \ldots, d_{3k}\}$ and a family ${\cal
  S} = \{S_1, \dots, S_n\}$ of sets of size three of elements from $D$.
\item[Question] Is it possible to select $k$ sets from ${\cal S}$
  such that their union is exactly $D$?
\end{description}


\begin{proof}

The proof constructs a UCM$_{Cope^\alpha}$ instance from an X3C
instance $(D , {\cal S})$ using graph representations to equate both
problem instances. A election for the reduction can be constructed
without an explicit collection of votes as such a collection can be
elicited from the set of $netadv$ or $adv$ scores for all pairs of
candidates\footnote{Refer Appendix A.}. The graph representation of
the election can be constructed from the set of $netadv$ or $adv$
scores as well\footnote{Refer Appendix B.}. Both these constructions
are polynomially bounded in the size of the set of candidates and
the value of the $netadv$ function. Thus, using these techniques
we can construct the election given a partial set of significant
candidates, the $netadv$ scores for all pairs of candidates, and the
lead in Copeland score that the distinguished candidate has over each specified candidate.


\begin{figure}[h]
\centering

\scalebox{1} 
{
\begin{pspicture}(0,-2.11)(8.0078125,2.11)
\psdots[dotsize=0.12](1.5884376,1.3)
\psdots[dotsize=0.12](1.5884376,0.5)
\psdots[dotsize=0.12](1.5884376,-0.3)
\psdots[dotsize=0.12](1.5884376,-1.1)
\psdots[dotsize=0.12](3.1884375,0.1)
\psdots[dotsize=0.12](4.7884374,1.3)
\psdots[dotsize=0.12](4.7884374,0.1)
\psdots[dotsize=0.12](6.3884373,1.3)
\psdots[dotsize=0.12](6.3884373,0.1)
\psdots[dotsize=0.12](4.7884374,-1.1)
\psdots[dotsize=0.12](6.3884373,-1.1)
\psline[linewidth=0.04cm,arrowsize=0.1
  2.0,arrowlength=2,arrowinset=0.4]{->}(1.5884376,1.3)(3.1884375,0.1)
\psline[linewidth=0.04cm,arrowsize=0.1
  2.0,arrowlength=2,arrowinset=0.4]{->}(1.5884376,0.5)(3.1884375,0.1)
\psline[linewidth=0.04cm,arrowsize=0.1
  2.0,arrowlength=2,arrowinset=0.4]{->}(1.5884376,-0.3)(3.1884375,0.1)
\psline[linewidth=0.04cm](3.1884375,0.1)(1.5884376,-1.1)
\psline[linewidth=0.04cm](6.3884373,-1.1)(4.7884374,-1.1)
\psline[linewidth=0.04cm](6.3884373,0.1)(4.7884374,0.1)
\psline[linewidth=0.04cm](6.3884373,1.3)(4.7884374,1.3)
\psline[linewidth=0.04cm](4.7884374,1.3)(3.1884375,0.1)
\psline[linewidth=0.04cm](3.1884375,0.1)(4.7884374,0.1)
\psbezier[linewidth=0.04,arrowsize=0.1
  2.0,arrowlength=2,arrowinset=0.4]{->}(6.3884373,1.3)(5.3884373,2.1)(3.7884376,1.9)(3.1884375,0.1)
\psline[linewidth=0.04cm](3.1884375,0.1)(4.7884374,-1.1)
\psbezier[linewidth=0.04,arrowsize=0.1
  2.0,arrowlength=2,arrowinset=0.4]{->}(6.3884373,0.1)(4.7884374,0.5)(4.5884376,0.5)(3.1884375,0.1)
\psbezier[linewidth=0.04,arrowsize=0.1
  2.0,arrowlength=2,arrowinset=0.4]{->}(6.3884373,-1.1)(4.7884374,-2.1)(3.5884376,-1.1)(3.1884375,0.1)
\rput(2.4628124,1.0){\small 2}
\rput(2.0628126,0.6){\small 2}
\rput(2.0628126,0.0){\small 2}
\rput(4.2628126,1.8){\small 2}
\rput(5.2628126,0.6){\small 2}
\rput(3.8628125,-1.4){\small 2}
\rput(0.89234376,-1.0){\small $c$}
\rput(0.89234376,-1.4){\small $-(n-k)$}
\rput(0.64234376,-0.2){\small $c_{i,1}$ $(-1)$}
\rput(0.64234376,0.6){\small $c_{i,2}$ $(-1)$}
\rput(0.64234376,1.4){\small $c_{i,3}$ $(-1)$}
\rput(5.38594,1.0){\small $d_{i,1}$ $(-1)$}
\rput(4.8485937,-0.2){\small $d_{i,2}$ $(-1)$}
\rput(5.248594,-0.8){\small $d_{i,3}$ $(-1)$}
\rput(7.2785935,1.2){\small $d'_{i,1}$ $(-1)$}
\rput(7.2785935,0.0){\small $d'_{i,2}$ $(-1)$}
\rput(7.2785935,-1.2){\small $d'_{i,3}$ $(-1)$}
\rput(3.010625,-0.4){\small $S_{i}$}
\rput(2.9810936,-0.8){\small (3)}
\end{pspicture} 
}
\caption{Gadget used in the Copeland$^1$ manipulation NP-hardness proof~\cite{fal-hem-sch:c:manip-copeland}: The gadget is constructed for each $S_{i}$ with numbers for each candidate showing the lead in Copeland score the preferred candidate has over them.}
\end{figure}
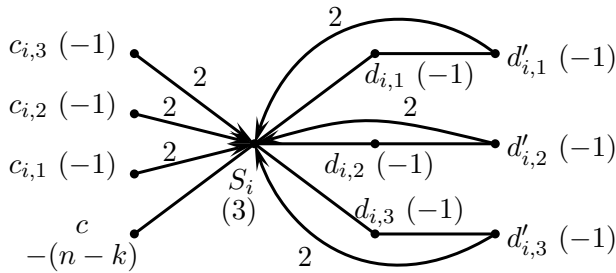

Given any instance of an X3C problem, the proof defines the set of
candidates as the elements of sets $D$ and $\cal S$, one preferred
candidate $p$, and one main adversary candidate $c$ (the candidate
holding the highest Copeland score prior to manipulation) as well as
auxiliary candidates used for padding and constructing the
mathematical gadgets.  Such gadgets (also called
widgets~\cite{cor-lei-riv-ste:b:algorithms-second-edition}) are a
common feature in reduction constructions: They are typically
intricate subgraphs that are constructed for each element of the
mapped-from problem.  They align calculations and enforce constraints
so as to ensure a tight mapping between instances of the two problems.

The numbers of the election are constructed in such a way that prior
to manipulation, the elements of $D$ each beat $p$ by 1 unit of Copeland
score and $c$ beats $p$ by $n-k$ units of Copeland score. However, the
elements of $\cal S$ each lose to $p$ by 3 units. Hence any successful set
of manipulating votes must cause the elements of $D$ and $c$ to lose
just enough pairwise contests against the elements of $\cal S$ to
erode their lead over $p$ without making any $S_{i}$ a possible
winner. This involves selecting a subset of $\cal S$ that beats every
candidate from $D$ but still leaves enough elements of $\cal S$ to lower $c$'s
score by $n-k$ points. The construction of such a vote corresponds to
selecting a $k$-sized subset of $\cal S$ that covers exactly the
elements of $D$, that is, a solution to the X3C problem. Hence, a
solution to UCM$_{Cope^{\alpha}}$ gives us a solution to X3C.
\end{proof}



\section{UCM in Second-Order Copeland}
Solutions for breaking ties in Copeland voting attempt to choose the
more ``powerful'' candidate as the winner, which can be defined in a
number of ways. One such method is second-order Copeland. It selects the candidate whose set
of defeated opponents (hereafter referred to as $DO_{c}$ for a
candidate $c$) has
the higher sum of Copeland scores. We will let $sum\_score(S)$ for a
set of candidates $S$ be the sum of the Copeland scores for candidates in
$S$, and so $sum\_score(DO_c)$ gives the second-order Copeland score
for a candidate $c$.

 Second-order Copeland has been used
by the National Football League and the United States Chess Federation
to break ties, and has a special place in voting theory: it was the
first tractable voting system for which the manipulation problem  was
shown to be NP-complete, even for just one manipulator~\cite{bar-tov-tri:j:manipulating}.

\subsection{Second-Order Copeland is Resistant to Manipulation}
The problem of unweighted coalitional manipulation in second-order
Copeland, hereafter referred to as UCM$_{2Cope}$, is NP-complete even
for one manipulator~\cite{bar-tov-tri:j:manipulating}. Verifying a
given solution is clearly in P as we simply have to calculate
Copeland scores and second-order Copeland scores for each
candidate. To prove UCM$_{2Cope}$ is NP-hard we show a polynomial-time
reduction from 3,4-SAT, a known NP-hard
problem~\cite{tov:j:sat-repeats}.

\noindent
\textbf{3,4-SAT}
\begin{description}
\item[Given] A set $U$ of Boolean variables, a collection of clauses
  $Cl$, each clause composed of disjunctions of exactly three
  \emph{literals}, which may be a variable or its complement, and each
  variable occurs in exactly four clauses.
  
\item[Question] Does there exist a Boolean assignment over $U$ such
  that each clause in $Cl$ contains at least one literal set to true?
\end{description}

To facilitate the reduction we construct a graph representation of a
second-order Copeland election that encodes the given 3,4-SAT
instance.  We will use a election graph representation with vertices
representing candidates and directed edges representing the result of
pairwise contests.

\begin{proof}

Given a 3,4-SAT instance, we create a
second-order Copeland election graph as follows: Every clause ($C_{1}$
to $C_{||Cl||}$) and every literal is a
candidate, represented as a vertex in our graph.  
The manipulating coalition's
chosen candidate is a separate candidate $C_{0}$. All pairs
of vertices have directed edges between them, representing decided
pairwise contests, except for any variable and its complement. The
decided contests cannot be overturned by our manipulators, while undecided contests can
be shifted in either direction according to the manipulating vote.  Clauses
beat (that is, have a directed edge to) literals they contain,
and lose to all other literals. 


In addition to these candidates derived from the 3,4-SAT instance, we pad
the election with a number of auxiliary candidates in such a way to
achieve the desired Copeland scores and second-order Copeland scores
for each of the candidates.  We will have that all the clause
candidates and $C_{0}$ are tied with the highest Copeland score.  We
will also have that each clause candidate $C_{i}$ has
$sum\_score(DO_{C_{i}}) = sum\_score(DO_{C_{0}}) -3$.  


The second-order Copeland score for
$C_{0}$ will be  independent of the variable-complement contests.  For all
possible outcomes of the variable-complement contests, $C_{0}$ still
beats every candidate except for the clause candidates.  Recall that the elements
for every clause candidate's defeated-opponent set are their component
literals.  Their second-order Copeland scores and the final result of
the election will then depend on how each of the variable-complement
contests are decided.



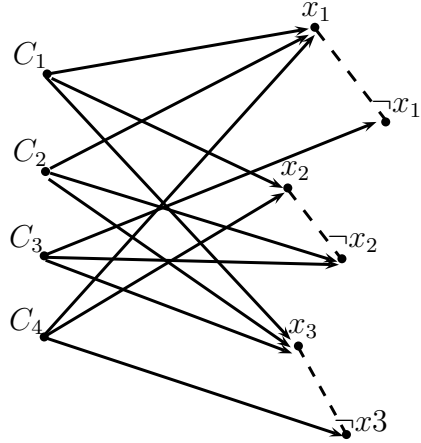
\begin{figure}
\centering
\scalebox{1} 
{}
\begin{pspicture}(0,-2.94125)(5.295625,2.96125)
\psdots[dotsize=0.12](0.2796875,1.93875)
\psdots[dotsize=0.12](0.2596875,0.63875)
\psdots[dotsize=0.12](0.2396875,-0.48125)
\psdots[dotsize=0.12](0.2396875,-1.56125)
\psdots[dotsize=0.12](3.8396876,2.55875)
\psdots[dotsize=0.12](4.7796874,1.29875)
\psdots[dotsize=0.12](3.4796875,0.41875)
\psdots[dotsize=0.12](4.1996875,-0.52125)
\psdots[dotsize=0.12](3.6196876,-1.68125)
\psdots[dotsize=0.12](4.2596874,-2.86125)
\usefont{T1}{ppl}{m}{n}
\rput(0.0575,2.18875){$C_{1}$}
\usefont{T1}{ppl}{m}{n}
\rput(0.0575,0.88875){$C_{2}$}
\usefont{T1}{ppl}{m}{n}
\rput(0.0075,-0.25125){$C_{3}$}
\usefont{T1}{ppl}{m}{n}
\rput(0.0075,-1.35125){$C_{4}$}
\usefont{T1}{ppl}{m}{n}
\rput(3.8625,2.76875){$x_{1}$}
\usefont{T1}{ppl}{m}{n}
\rput(4.9325,1.48875){$\lnot x_{1}$}
\usefont{T1}{ppl}{m}{n}
\rput(3.5810938,0.64875){$x_{2}$}
\usefont{T1}{ppl}{m}{n}
\rput(4.351094,-0.31125){$\lnot x_{2}$}
\usefont{T1}{ppl}{m}{n}
\rput(3.6846876,-1.45125){$x_{3}$}
\usefont{T1}{ppl}{m}{n}
\rput(4.4546876,-2.67125){$\lnot x3$}
\psline[linewidth=0.04cm,arrowsize=0.05291667cm 2.0,arrowlength=1.4,arrowinset=0.4]{->}(0.2996875,1.93875)(3.7596874,2.53875)
\psline[linewidth=0.04cm,arrowsize=0.05291667cm 2.0,arrowlength=1.4,arrowinset=0.4]{->}(0.3396875,1.87875)(3.4196875,0.41875)
\psline[linewidth=0.04cm,arrowsize=0.05291667cm 2.0,arrowlength=1.4,arrowinset=0.4]{->}(0.2596875,1.89875)(3.5196874,-1.60125)
\psline[linewidth=0.04cm,arrowsize=0.05291667cm 2.0,arrowlength=1.4,arrowinset=0.4]{->}(0.3196875,0.67875)(3.7596874,2.45875)
\psline[linewidth=0.04cm,arrowsize=0.05291667cm 2.0,arrowlength=1.4,arrowinset=0.4]{->}(0.3196875,0.61875)(4.1396875,-0.56125)
\psline[linewidth=0.04cm,arrowsize=0.05291667cm 2.0,arrowlength=1.4,arrowinset=0.4]{->}(0.2996875,0.53875)(3.5196874,-1.70125)
\psline[linewidth=0.04cm,arrowsize=0.05291667cm 2.0,arrowlength=1.4,arrowinset=0.4]{->}(0.2796875,-0.46125)(4.6596875,1.29875)
\psline[linewidth=0.04cm,arrowsize=0.05291667cm 2.0,arrowlength=1.4,arrowinset=0.4]{->}(0.2196875,-0.50125)(4.1596875,-0.60125)
\psline[linewidth=0.04cm,arrowsize=0.05291667cm 2.0,arrowlength=1.4,arrowinset=0.4]{->}(0.2596875,-0.54125)(3.5396874,-1.78125)
\psline[linewidth=0.04cm,arrowsize=0.05291667cm 2.0,arrowlength=1.4,arrowinset=0.4]{->}(0.2796875,-1.58125)(4.2196875,-2.88125)
\psline[linewidth=0.04cm,arrowsize=0.05291667cm 2.0,arrowlength=1.4,arrowinset=0.4]{->}(0.2396875,-1.56125)(3.4396875,0.35875)
\psline[linewidth=0.04cm,arrowsize=0.05291667cm 2.0,arrowlength=1.4,arrowinset=0.4]{->}(0.2396875,-1.54125)(3.8396876,2.45875)
\psline[linewidth=0.04cm,linestyle=dashed,dash=0.16cm 0.16cm](3.8596876,2.51875)(4.7996874,1.29875)
\psline[linewidth=0.04cm,linestyle=dashed,dash=0.16cm 0.16cm](3.4796875,0.39875)(4.2196875,-0.56125)
\psline[linewidth=0.04cm,linestyle=dashed,dash=0.16cm 0.16cm](3.6196876,-1.70125)(4.2596874,-2.86125)
\end{pspicture} 
\caption{A partial representation of the resultant
  election graph: Clauses represented are $C_{1} (x_{1} \vee x_{2} \vee x_{3})$, $C_{2} (x_{1} \vee \lnot x_{2} \vee x_{3})$, $C_{3} (\lnot x_{1} \vee \lnot x_{2} \vee x_{3})$ and $C_{4} (x_{1} \vee x_{2} \vee \lnot x_{3})$. Each clause vertex beats the variables (or their
  complements) that are its component literals. The dotted lines indicate
  undecided (second-order Copeland) contests between variables and
  their complements.}
\end{figure}

Consider if any clause candidate
$C_{i}$'s literals win all their contests. The $sum\_score(DO_{C_{i}})$
increases by 3 points and $C_{i}$ is tied with
$C_{0}$ for first place. Therefore, in order for $C_{0}$ to be
the unique winner, at least one element of each clause candidate's defeated-opponent set must lose
one of their contests.  
 For any variable $x$, we can interpret a directed edge from $x$ to
 $\lnot x$ as setting $x$ to $true$.  A vote that allows $C_{0}$ to
 win, then, would correspond exactly to each clause having at least
 one literal evaluating to true. In other words, we have a solution
 for UCM$_{2Cope}$ if and only if we have solution for
 3,4-SAT\@. Therefore UCM$_{2Cope}$ is NP-hard with just a single
 manipulator.
\end{proof}


\section{UCM in Maximin}
Maximin voting, also known as the Simpson-Kramer method, is
a typical Condorcet voting system. As such it deals with contests
between pairs of candidates, specifically their $netadv$
scores. To find the winner under maximin, given a $netadv$
function over the set of candidates, we first select the
lowest $netadv$ score for each candidate $k$ in $C$, i.e., we select
the minimum score for $netadv(k, k')$ for all $k'$ in $C$ such that
$k' \neq k$. The winner is the candidate with the highest such
score. We can trivially see that a candidate with a
minimum $netadv$ score greater than $0$ will be the Condorcet winner
and there can only be one such candidate, thus maximin is a Condorcet
voting system.

Calculating the maximin winner is easily seen to be polynomial in the
size of the election, but Xia et
al.~\cite{con-pro-ros-xia:t:unweighted-manipulation} prove that for
two or more manipulators the problem of UCM$_{maximin}$ is
NP-complete.


\subsection{Maximin is Resistant to Manipulation}
UCM$_{maximin}$ is NP-complete for two or more
manipulators~\cite{con-pro-ros-xia:t:unweighted-manipulation}: Verifying
an instance is easily seen to be polynomial in the size of the election as we can calculate the winner in polynomial time.  To
prove UCM$_{maximin}$ is NP-hard we construct a polynomial-time
reduction from the \emph{vertex-disjoint-two-path} problem, known to
be NP-complete~\cite{LaPaugh:1978:SHP:800133.804330}.

\noindent
\textbf{\boldmath Vertex-Disjoint-Two-Path problem (VDP$_{2}$)}
\begin{description}
\item[Given] A directed graph $G$ and two sets of vertices $u,u'$ and
  $v,v'$ such that all four vertices are unique.
\item[Question] Do there exist two paths $u \rightarrow u_{1}
  \rightarrow \ldots \rightarrow u_{j} \rightarrow u'$ and $v
  \rightarrow v_{1} \rightarrow \ldots \rightarrow v_{k} \rightarrow
  v'$ in $G$ such that each path is a set of vertices disjoint from
  the other?
\end{description}

\begin{proof}
To facilitate our reduction construction from a graph problem such as
the vertex-disjoint-two-path problem to maximin, we first construct a
graph representation of maximin elections.  A complete set of votes is
not required to represent an election\footnote{Refer Appendix A.}. We
can construct the same given just a $netadv$ (or $adv$)
function. Also, there exists a bijection between the $netadv$ function
and directed edges of a complete antisymmetric graph such that given one,
we can represent it in terms of the other\footnote{Refer Appendix B.}.



$P_{coal}$ is the set of votes of the manipulating
coalition and $P_{noncoal}$ is the set of all other
votes. The term $netadv_{noncoal}$ indicates the $netadv$ score obtained by
considering only the noncoalitional votes with $netadv_{coal}$
similarly defined. $M$ is the size of the coalition and $c$ is the
candidate supported by the manipulating coalition.

Given a VDP$_2$ instance, that is a graph $G(V_G, E)$ and vertices
$u,u',v,v' \in V$, we obtain a
graph $G'$ using the following constructions and assumptions:

\begin{itemize}
\item  Every vertex in our graph is reachable from $u$ or $v$.
\item  There are no directed edges $u \rightarrow v' $ or $v
  \rightarrow u' $.
\item  We add special edges $u' \rightarrow v$ and $v' \rightarrow
  u $ such that $E_{G'} = E \cup \{(u',v), (v', u)\}$.
\end{itemize}

Our UCM$_{maximin}$ instance then is as follows:
\begin{itemize}
\item Set of candidates $C = V_G$.
\item 
  The set of $P_{noncoal}$ preferences.
  \begin{itemize}
  \item $\forall c' \in C: c' \neq c, netadv(c,c') = -4M $;
  \item $netadv(u,v') = netadv(v,u') = -4M$;
  \item For all other edges $(x,y)$ in $E$, $netadv(y,x) = -2M - 2$;
  \item For all other vertex pairs $a,b$, $netadv(a,b) = 0$.
  \end{itemize}
\end{itemize}

Regarding $P_{coal}$ votes, we can freely assume that every coalition
vote will rank $c$ first, thus giving $netadv_{coal}(c,c') = M$ and
$netadv_{noncoal \cup coal}(c,c') = -3M$. Thus, in our construction,
scores for $netadv(c,c')$ are fixed for all candidates $c' \in C$.  In
order for $c$ to be the winner, at least one $netadv$ score must be
less than $-3M$ for every other candidate. We will see that in our
construction this can occur if and only if there are two
vertex-disjoint paths in $G'$.

\noindent
\textbf{Direction 1} The existence of vertex-disjoint paths $u \rightarrow u_{1}
\rightarrow \ldots \rightarrow u_{j} \rightarrow u'$ and $v
\rightarrow v_{1} \rightarrow \ldots \rightarrow v_{k} \rightarrow
v'$  yields a $P_{coal}$
that makes $c$ the winner. In order to construct the manipulative
preferences, we will make use of a connected subgraph over $G'$
containing all the vertices, but with $u \rightarrow
\ldots \rightarrow u' \rightarrow v \rightarrow \ldots \rightarrow v'
\rightarrow u $ as the only cycle in the graph.  


We can construct $P_{coal}$ votes in 3 parts as follows:
Each manipulator vote will rank $c$ the highest, followed by the
vertex-disjoint-path vertices, followed by the other vertices.

\textit{Other-vertex ordering:} 
These vertices will be ordered in the votes based on a linear order
extracted from the single-cycle subgraph.

\textit{Vertex-disjoint-path orderings:} We have two
vertex-disjoint-path orderings: $u \rightarrow \ldots \rightarrow u'
\rightarrow v \rightarrow \ldots \rightarrow v'$ and $v \rightarrow
\ldots \rightarrow v' \rightarrow u \rightarrow \ldots \rightarrow u'$
and thus two possible vote constructions for $P_{coal}$. We construct
$M-1$ votes as per the first ordering and 1 vote as per the
second\footnote{Switching these orderings results in the same
  outcome.}. Thus $netadv(c,c')$ increases by $M$ points but every
other $netadv$ score increases by less than $M$ points, making $c$ the
winner. The calculations are as follows for the complete (coalitional
and noncoalitional) set of votes:

\begin{itemize}
\item $netadv(u,v') = -4M + (M -1) - 1 = -3M -2$ 
\item $netadv(v,u') = -4M + 1 -(M -1) = -5M +2$ 
\end{itemize}

For any other candidate $c' \not\in \{c,u,v\}$, we can see that there
exists some candidate $d$ in every vote of $P_{coal}$ that beats
$c'$, i.e., the lowest $netadv$ score for $c'$ is \mbox{$netadv_{coal}(c',d)
= -M$}, and thus for the complete set of votes the (lowest) $netadv$
for any such candidate $c'$ is no more than $-2M -2 -M = -3M -2$. All the above
$netadv$ scores are less than $-3M$ for all values of $M \geq 2$, thus
$c$ is the winner.

\noindent
\textbf{Direction 2} The existence of a $P_{coal}$ that makes $c$ a
winner yields a positive  VDP$_{2}$ instance in the graph $G'$:

Since $c$ is the winner, we know that for any other candidate $c'$ in
$C$, there exists a candidate $d$ that beats $c'$ such that:

\begin{itemize}
\item $netadv(c',d) < -3M$;
\item There exists an edge $(d, c')$ in $G'$;
\item $d$ is ranked higher than $c'$ in a majority of the total votes
  and in at least one vote in $P_{coal}$---the proof of this
  is as follows.
\end{itemize}

Consider such an edge $(d, c')$\footnote{If there exists more than one
  such $d$ we choose one arbitrarily.}: either $(d, c')$ is one of the
special edges $(v', u)$, $(u', v')$ or $(d, c') \in E$. If $(d, c')$
is a special edge, then at least one vote in $P_{coal}$ must prefer
$d$ to $c'$ (since $netadv(c',d) < -3M$). If $(d, c') \in E$, then all
$M$ votes in $P_{coal}$ must prefer $d$ to $c'$.

For this $d$, we can choose a candidate that beats it with sufficient
margin, and continue to find such a candidate for the previous choice
of $d$\footnote{Choosing such a $d$ is formalized in the proof of Xia
  et al.~\cite{con-pro-ros-xia:t:unweighted-manipulation} as a
  composite function $f$.}. That is, we find a $d$ for $c'$ starting with
$u$ and then continue to find such a candidate after setting $d$ to
$c'$. There is only one possible $d$ for $c'$ as either $u$ or
$v$. Thus, we obtain a set of chained pairs recursively. This, coupled
with the facts that any vertex is reachable from $u,v$ and the
existence of special edges (both by construction), we obtain a cycle
of vertices which breaks into disjoint sets along the special
edges. Obtaining such a set of pairs with $netadv$ less than $-3M$ is
not possible without the existence of a ($c$-winner-making) $P_{coal}$
(as per the third item above).
\end{proof}

\section{UCM in Tideman Ranked Pairs}
Tideman ranked pairs (TRP) was conceived by Nicolaus Tideman in 1987
when attempting to define a voting system that ``almost always'' has
the property of independence of
clones~\cite{springerlink:10.1007/BF00433944}.  It is defined as
follows: given a $netadv$ function over the set of candidates, create
a list by ranking the pairs in descending order of their scores.  In
the case of a tie between two $netadv$ pairs, e.g., $netadv(a,b) =
netadv(x,y)$, we break ties by ordering the pairs lexicographically
according to an arbitrary ordering of the candidates.  We add the
top-scoring pair to an election graph $G$\footnote{As usual, $V_G = C$
  and directed edges represent $netadv$ scores.} if the resultant graph does not contain a cycle. Otherwise we
skip this pair and move on to the next one in the list.  We continue
until we have considered all pairs.  Since we now have a directed acyclic
graph, there must exist a source vertex, which we state to be the TRP
winner.

\subsection{TRP is Resistant to Manipulation}
Xia et al.~\cite{con-pro-ros-xia:t:unweighted-manipulation} found that
UCM$_{TRP}$ is NP-complete even for one manipulator. We can easily see
that verifying a given solution to UCM$_{TRP}$ is in P\@. UCM$_{TRP}$
was proven to be NP-hard by a polynomial-time reduction from 3SAT.

\noindent
\textbf{Three-Conjunctive-Normal-Form Satisfiability (3SAT)}

\begin{description}
\item[Given] A set $U$ of Boolean variables,
  a collection of clauses $Cl$, each clause composed of disjunctions
  of exactly three \emph{literals}, which may be a variable or its
  complement.  
\item[Question] Does there exist a Boolean assignment over $U$ such
  that each clause in $Cl$ contains at least one literal set to true?
\end{description}

\begin{proof}
Given a 3SAT instance, we construct a UCM$_{TRP}$ election graph as
follows. Clauses $C_{1} \ldots C_{||Cl||}$ are vertices as is the
coalition's preferred candidate $c$. For each clause $C_{i}$, we
construct six other special clause candidates---three for the literals
it contains and three for their complements. $C_{i}$
beats (has a directed edge to) the
special clause candidates corresponding to the literals it contains,
which in turn beat the literals they correspond to.  We also have a
$C_{i}'$ such that it is beaten by the complement of the literals in
$C_{i}$. Candidate $c$ starts out beating the $C_{i}$ candidates but
gets defeated (though by a smaller margin) by the $C_{i}'$
candidates. The intuition of this correspondence is that the edges beating
$c$ are so weak and so far down the ordering that they are not added,
leaving $c$ to be the source vertex (and TRP winner) if and only if there exists
a solution to 3SAT\@. Thus the UCM problem is NP-complete even in the
case of a single manipulator. The proof of Xia et al.~\cite{con-pro-ros-xia:t:unweighted-manipulation} relies
on mathematical gadgets to achieve this correspondence.
\end{proof}

\section{Conclusion}

Our survey of UCM results can be seen as qualifying election systems
by a single metric. Determining which election system is superior is
an ongoing debate often reflecting differing
philosophies. Pierre-Simon Laplace in his lectures at the Ecole
Normale Superieure in 1795 attacked the \emph{\'{e}l\'{e}ction par ordre de m\'{e}rite}
(election by ranking of merit) system of his contemporary Jean-Charles
de Borda~\cite{szp:b:numbers}, later proposing a variation of the
majority rule in its place. Much of the modern literature on voting
theory is still devoted to advocacy for particular voting systems,
arguing their superiority by  one metric or another~\cite{newenhizen:bordarespectscondorcet,saari:Bordabetter,risse:cordorcetborda}.


In our survey we showcase manipulation results for a
particular class of voting systems, namely those with a tractable
winner problem but where unweighted coalitional manipulation is hard
for a constant coalition size. The complexity of this case of  UCM  has been determined for
most common voting rules, though a few remain: Copeland$^{0.5}$
remains unsolved even as results for all other parameter values have
been found~\cite{fal-hem-sch:c:manip-copeland}.

\begin{table}[t]\footnotesize
\begin{tabular}{|c|c|c|}
\hline
Voting rule & Coalition size = 1 & Coalition size $\geq$ 2\\
\hline
Copeland$^{\alpha} (0 < \alpha < 0.5)(0.5 < \alpha < 1)$ & P~\cite{bar-tov-tri:j:manipulating} & NP-complete~\cite{fal-hem-sch:c:copeland-ties-matter}\\
Copeland$^{\alpha}(\alpha = \{0,1\})$  & P~\cite{bar-tov-tri:j:manipulating} & NP-complete~\cite{fal-hem-sch:c:manip-copeland}\\
Copeland$^{\alpha}(\alpha = \{0.5\})$  & P~\cite{bar-tov-tri:j:manipulating} & ? \\
Second-order Copeland & NP-complete~\cite{bar-tov-tri:j:manipulating} & NP-complete~\cite{bar-tov-tri:j:manipulating}\\
Single Transferable Vote & NP-complete~\cite{bar-oli:j:polsci:strategic-voting} & NP-complete~\cite{bar-oli:j:polsci:strategic-voting}\\
Maximin & P~\cite{bar-tov-tri:j:manipulating} & NP-complete~\cite{con-pro-ros-xia:t:unweighted-manipulation}\\
Tideman Ranked Pairs & NP-complete~\cite{con-pro-ros-xia:t:unweighted-manipulation} & NP-complete~\cite{con-pro-ros-xia:t:unweighted-manipulation}\\
Borda & P~\cite{bar-tov-tri:j:manipulating} & NP-complete~\cite{bet-nie-woe-c-borda, dav-kat-nar-wal:j:borda}\\
Bucklin & P~\cite{con-pro-ros-xia:t:unweighted-manipulation} & P~\cite{con-pro-ros-xia:t:unweighted-manipulation}\\
Plurality with Runoff & P~\cite{pro-ros-zuc:j:borda} & P~\cite{pro-ros-zuc:j:borda}\\
Veto & P~\cite{bar-tov-tri:j:manipulating} & P~\cite{pro-ros-zuc:j:borda}\\
Cup & P~\cite{con-lan-san:j:few-candidates} & P~\cite{con-lan-san:j:few-candidates}\\

\hline
\hline
\hline
\end{tabular}
\caption{Table of UCM results for common tractable voting systems.}
\label{tab:template}
\end{table}


Several related areas of research, however, remain more or less
uncharted. The most significant simplification in the literature is
that most of the hardness results achieved are just worst-case.
Several papers have studied whether voting systems are difficult to
manipulate in a large fraction of instances, finding that manipulation
can be easy in the average case while being hard in the worst
case~\cite{con-san:c:nonexistence,pro-ros:j:juntas,pro-ros:c:average}.
Additionally, approximation algorithms exist for several worst-case
hardness results.  Brelsford et
al.~\cite{bre-fal-hem-sch-sch:c:approximating-elections} formalized
manipulation as an optimization problem and then studied whether this
version of the problem is approximable.  Zuckerman et
al.~\cite{pro-ros-zuc:j:borda} discovered an approximation algorithm for
Borda manipulation before it was known to be NP-hard, and Davies et
al.~\cite{dav-kat-nar-wal:j:borda} gave several other approximation algorithms for this
problem.
In other results, approximation algorithms for manipulation of 
maximin as well as families of scoring protocols exist~\cite{Zuckerman:2010:COMSOC,Xia_ascheduling}. 
Other techniques include the use of relatively efficient algorithms for the
NP-complete integer partitioning problem to solve manipulation
instances~\cite{lin:c:partition}.

Conitzer et al.~\cite{con-lan-san:j:few-candidates} qualified the manipulation problem with an additional
metric: the minimum number of candidates that must be present for
manipulation to be
NP-hard.  Additionally their work
breaks from the standard model and studies whether manipulation is
hard for cases where manipulators do not have complete information of
all of the votes.  Slinko explored how often elections will be
manipulable based on the size of the manipulative
coalition~\cite{sli:j:howlarge}.

Research into how often elections can be
manipulated~\cite{fri-kal-nis:c:quantiative-gib-sat}, and more general
areas such as parametrization of NP-hard
problems~\cite{niedermeier:LIPIcs:2010:2495} and phase
transitions~\cite{Cheeseman91wherethe, Kirkpatrick27051994,
  phasetranzhang} lead to a more nuanced approach to problem
classification.  Phase transitions have been examined in the
manipulation problem
for the veto rule~\cite{wal:c:phase}.


Another issue is that votes are most commonly represented in the
literature as transitive linear preference orderings over the
set of candidates and the concept of irrational votes has only been
sparsely dealt
with. Irrational (by which
we mean intransitive) votes, may be more apt for any number of
real-world scenarios where voters tend to rank candidates according to
multiple criteria. Irrational votes are not represented as a linear
ordering but as a preference table which holds the voter's choice
for any pair of candidates.   
For Copeland$^\alpha$ for $\alpha \in
 \{0,0.5,1\}$, manipulation is in P in the irrational voter model,
while it is known to be NP-hard for $\alpha \in \{0,1\}$ in the
standard voter model~\cite{fal-hem-sch:c:manip-copeland}.  Thus voting
systems may have different behavior with regard to manipulation in the
irrational voter model and it deserves more study.  Another convention
is that the default definition of UCM is constructive---i.e., efforts
are directed to making a preferred candidate a winner, rather than
preventing a certain candidate from winning. Variations of UCM with a
destructive approach is another area rich with possibilities.

UCM instances presented in this paper typically have a large number of
candidates and a smaller constant-sized coalition of manipulators. In
contrast, cases with a small number of candidates and a relatively
large manipulating coalition might be considered more natural. Betzler
et al.~\cite{bet-nie-woe-c-borda} mention a specific open problem in this area: whether
there exists a combinatorial algorithm to solve Borda efficiently with
few candidates and an unbounded coalition
size. Another open problem is solving a
UCM$_{Borda}$ instance having a coalition of size 2 in less than
$O(||C||!)$. 

Another approach to the manipulation problem taken by Conitzer and
Sandholm~\cite{con-san:c:voting-tweaks} and Elkind and
Lipmaa~\cite{elk-lip:c:small} is modifying voting systems to give them
greater resistance to manipulation.  Both add an extra initial round
of subelections between subsets of the candidates.  Conitzer and
Sandholm~\cite{con-san:c:voting-tweaks} describe techniques that can make manipulation NP-hard or
even PSPACE-hard for these modified voting systems.  Elkind and Lipmaa~\cite{elk-lip:c:small}
present a version of this technique that uses one-way functions to
construct the initial-round schedule from the set of votes.  Reversing
the one-way function is computationally hard, preventing election
organizers from gaming the initial round and forcing their desired result
in polynomial time.  These techniques essentially construct new voting
systems by structurally augmenting standard systems to imbue them
with complexity.

Other related work includes the study of electoral control, which
encompasses \mbox{attempts} by an election organizer to change the
result by modifying the election structure in various
ways~\cite{bar-tov-tri:j:control,fal-hem-hem-rot:j:llull,erd-fel-pir-rot:c:param,hem-hem-rot:j:hybrid},
which also encompasses cloning, or adding candidates very similar to
existing candidates in an attempt to split their
support~\cite{springerlink:10.1007/BF00433944,elk-fal-sli:c:clone}.
Other ways to influence elections include bribery and campaign
management, where in both cases a briber attempts to sway the result
of an election by paying off a set of voters to change their
votes~\cite{fal-hem-hem:j:bribery,fal-hem-hem-rot:j:llull,fal:j:nonuniform,elk-fal-sli:c:swap,elk-fal:j:approx-camp,sch-fal-elk:c:camp}.
These, too, are problems endemic to many voting systems to which
complexity can serve as a defense.


Another possible response to the problems presented by Arrow's theorem
and the Gibbard-Satterthwaite theorem is to reconsider the standard
model of the aggregate function.  Balinski and Laraki~\cite{bal-lar:c:theory} introduce a
model where voters give candidates independent grades, such as the
letter grades F to A or $\{$good, average, bad$\}$, similar to
approval voting or range voting, rather than ranking them in a linear
order.  In a sense this represents a reversion
to the pre-Bergson-Samuelson model of welfare functions.
Balinski and Laraki's method defines the aggregate grade of each
candidate to be the median grade over all votes, unlike range voting
where the aggregate grade is the average.  We can obtain a complete
aggregate \mbox{preference} ordering of the candidates with this
method provided that ties can be broken. Balinski and Laraki give a
tie-breaking mechanism that successively removes one of the
median-score-awarding voters from the votes for each tied candidate
and recomputes the median grades until they are no longer
tied~\cite{bal-lar:c:theory}.  Their approach does not
rely on complexity but instead redesigns the election model to become
strategy-proof in a limited case defined by the authors.







Faliszewski et al.~\cite{fal-hem-hem-rot:j:shield} showed
that with a restriction to single-peaked preferences, a wide range of manipulation and
control instances that are NP-hard in the general case turn out to be
easy (though not any of the results we describe here).  For some
voting systems these problems remain easy even with a partial relaxation of
the single-peaked model that allows for a small number of ``mavericks'',
whose votes are not aligned with  the single-peaked ordering~\cite{fal-hem-hem:j:nearly}.  The
single-peaked model is considered ``\emph{the} canonical setting for
models of political institutions''~\cite{gal-pat-pen:j:arrow}, so
this work calls the significance of a number of hardness results into
question.  


After the birth of research in the manipulation problem with the work
of Bartholdi et al.~\cite{bar-tov-tri:j:manipulating}, most research
moved towards the weighted voter model and many results for the
weighted coalitional manipulation problem (WCM) were
achieved~\cite{con-lan-san:j:few-candidates, hem-hem:j:dichotomy},
until the resurgence of interest in the UCM
problem~\cite{bet-nie-woe-c-borda,
  dav-kat-nar-wal:j:borda,fal-hem-sch:c:manip-copeland,
  fal-hem-sch:c:copeland-ties-matter,con-pro-ros-xia:t:unweighted-manipulation}.
It can be argued that as compared to WCM, UCM is a better test of a
voting system's vulnerability to manipulation. UCM serves as a special
case of WCM and hence subsumes its hardness results. In other words,
if an election system is resistant to manipulation in the UCM case, it
will resist manipulation in the WCM case, but the other direction does
not necessarily follow.  With this problem solved for most common
voting systems, we look forward to the resolution of the remaining
open problems as well as new avenues of research into the
manipulability of voting systems.

\appendix

\section{Constructing an Election Given a  {\boldmath $netadv$} Function: the McGarvey Method}

While the traditional representation of an election requires a set of
votes, we can construct these votes given a pairwise relation denoting
preference over the
set of candidates.  This method was given by
McGarvey~\cite{mcg:j:election-graph}, and can be applied with very
little modification to a $netadv$ function.

\subsection{From a Preference Pattern to a Set of Votes} 
\begin{theorem}

\emph{(McGarvey's Theorem)}
\label{McGarvey}
Given a \emph{preference pattern} we can elicit a set of votes
(defined to be strict and complete preference orderings over the set
of candidates) such that (the ordering derived from) the preference
pattern is the result of the election.
\end{theorem}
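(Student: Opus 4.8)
The plan is to prove this constructively with McGarvey's classical superposition trick, building the required profile one ordered pair of candidates at a time. I read a \emph{preference pattern} as an antisymmetric relation on the candidate set $C$ (or, in the quantitative form used elsewhere in the paper, an assignment of a target $netadv$ value to each ordered pair); when the relation is acyclic it induces the ordering mentioned in the statement, and since the ``result of the election'' for a pairwise/Condorcet-style rule is determined entirely by the pairwise comparisons, it suffices to exhibit a multiset of strict linear orders over $C$ whose induced pairwise margins reproduce the given pattern (in sign, respectively in value).

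First I would isolate the basic gadget. Fix an ordered pair $(a,b)$ of distinct candidates and an arbitrary enumeration $e_1,\dots,e_{m-2}$ of the other candidates, and consider the two votes
\[
 v^+ :\ a \succ b \succ e_1 \succ e_2 \succ \cdots \succ e_{m-2},
 \qquad
 v^- :\ e_{m-2} \succ e_{m-3} \succ \cdots \succ e_1 \succ a \succ b .
\]
Then I would do the bookkeeping: $a$ sits above $b$ in both votes, so this pair contributes $+2$ to $netadv(a,b)$; every pair inside $\{e_1,\dots,e_{m-2}\}$ occurs in one order in $v^+$ and in the opposite order in $v^-$, and every pair of the form $(a,e_i)$ or $(b,e_i)$ likewise occurs once in each direction, so all of those contributions cancel. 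Hence one copy of the gadget shifts exactly the entry $netadv(a,b)$ (and its mirror $netadv(b,a)$) by $\pm 2$ and leaves every other entry unchanged; for $m \le 2$ the gadget degenerates harmlessly to repeating $a \succ b$.

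Next I would superpose. Since $netadv$ is additive over disjoint blocks of votes, the ordered pairs can be handled independently: for the unweighted (tournament) version, add one copy of the $(a,b)$ gadget for each pair with $a$ ahead of $b$ in the pattern, making every such margin strictly positive and hence every pairwise winner correct; for the quantitative version with target $netadv(a,b) = 2k_{a,b}$, add $k_{a,b}$ copies. The resulting profile has exactly the prescribed margins, so the derived ordering --- when the pattern is acyclic, in particular when it is a linear order --- is the election result. The total number of votes is $O(m^2)$ in the first case and $O(m^2 N)$, with $N$ a bound on the target values, in the second, so the construction is polynomially bounded, precisely the fact invoked in the Copeland, maximin, and Tideman sections; and if an odd margin were ever demanded one first throws in a single background vote together with its reverse (which changes nothing) and then applies the even adjustments, though for the election outcome only the sign of each margin matters.

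I do not expect a real obstacle here, since the construction is elementary; the only steps that require care are verifying the cancellation property of the gadget --- that \emph{every} non-target ordered pair genuinely appears once in each direction across $v^+$ and $v^-$ --- and checking that the per-pair gadgets can be stacked without interfering, which is exactly what the additivity of $netadv$ buys us.
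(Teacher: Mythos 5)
Your construction is essentially the paper's own proof: the same two-vote McGarvey gadget for each ordered pair (the paper handles an indifferent pair $aIb$ by adding the orderings $ab c_1\ldots c_{m-2}$ and $c_{m-2}\ldots c_1 b a$ so the pair cancels, whereas you simply add no gadget, leaving the margin $0$ --- both work), superposed using additivity of the pairwise margins. One side remark is off --- adding a background vote together with its reverse cannot create odd margins, since every $netadv$ value has the parity of the total number of voters (the paper instead adds a single extra vote along a fixed ordering) --- but that concerns the quantitative $netadv$ extension, not the stated theorem, so your proof of the statement itself is fine.
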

\begin{definition}
A \emph{preference pattern} is a set of relations over the
set of candidates. The relations are a preference relation
(expressed as $aPb$ viz.~$a$ is preferred to $b$) and an indifference
relation ($aIb$ viz.~$a$ is neither preferred to $b$ nor is $b$
preferred to $a$).
\end{definition}

Both relations are \emph{distinct} for any pair of candidates -
i.e., $aPb$ implies $\lnot bPa$, and $aIb$ implies $bIa$.  Thus we will
have $m(m-1)/2$ pairs over both relations where $m$
is the number of candidates. McGarvey's method
constructs a set of votes as follows:

For each pair $aPb$ with remaining candidates $c_1,\dots,c_{m-2}$, we
construct two preference orderings $abc_{1} \ldots c_{m-2}$ and
$c_{m-2} \ldots c_{1} a b$\footnote{Where candidates are listed in
  order of decreasing preference.}. For each pair $aIb$, we construct
$a b c_{1} \ldots c_{m-2}$ and $c_{m-2} \ldots c_1 b a$ The idea is
that on evaluation for these preferences, rankings of all candidates
besides $a,b$ from these orderings will be equal, and the rankings of
$a,b$ reflect the preference relation under consideration.  Consider
the example $C = {a,b,c,d}$. The six pairs we consider are ${aPb, aPc,
  aPd, bPc, bPd, cPd}$.

To represent $aPb$ we construct two votes $abcd$ and
$dcab$. Evaluating these two votes in the context of pairwise rankings
leads to two votes for $aPb$ and no votes for any other pair over the
set of candidates. Similarly, for $aPc$, we construct $acbd$ and $dbac$
and so on. The idea is that for all candidates besides the ones under
consideration, preferences for and against them cancel each other
out. Hence the need for two votes for each pair. The
total number of thus-constructed votes is twice the cardinality of the
preference pattern. Thus in our example, we obtain a set of votes
which yield exactly the relations in the given
preference pattern.

\subsection{From a {\boldmath $netadv$} Function to a Set of Votes}
We consider the following useful property of the $netadv$ function
when constructing the corresponding election:

\begin{theorem}
For all pairs of candidates $c_{i}, c_{j}$ where $c_{i} \neq c_{j}$,
the values of  $netadv(c_{i}, c_{j})$ are either all even or all odd.
\end{theorem}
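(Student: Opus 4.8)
The plan is to reduce everything to a single parity observation about the total number of votes. Recall from the \textbf{Terms Defined} section that a vote is a strict linear ordering over $C$, so for any two distinct candidates $c_i, c_j$ every voter ranks exactly one of $c_i, c_j$ above the other. Hence the two quantities $adv(c_i,c_j)$ and $adv(c_j,c_i)$ partition the electorate, giving the identity $adv(c_i,c_j) + adv(c_j,c_i) = n$, where $n = \|V\|$. This is the only structural fact I need, and it is where the strictness of votes is essential (with indifference permitted the identity would fail).

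Next I would rewrite the net advantage using this identity. Since $netadv(c_i,c_j) = adv(c_i,c_j) - adv(c_j,c_i)$, substituting $adv(c_j,c_i) = n - adv(c_i,c_j)$ yields $netadv(c_i,c_j) = 2\,adv(c_i,c_j) - n$. Therefore $netadv(c_i,c_j) \equiv -n \equiv n \pmod 2$ for \emph{every} pair of distinct candidates $c_i, c_j$. The right-hand side does not depend on the pair chosen: it is determined solely by $n$, a fixed parameter of the election.

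Consequently all $netadv$ values share the common parity of $n$: if $n$ is even then every $netadv(c_i,c_j)$ is even, and if $n$ is odd then every $netadv(c_i,c_j)$ is odd. This establishes the claim. I do not anticipate a genuine obstacle here; the only point requiring care is making explicit that the result uses the convention that votes are complete strict orders (so that $adv(c_i,c_j)+adv(c_j,c_i)=n$ with no ``tie'' contribution), which is the standing assumption of the paper. One could also remark that this lemma is exactly what makes the $netadv$-to-graph encodings in Appendices A and B well behaved, since it lets one pass freely between $netadv$ scores differing by $2$.
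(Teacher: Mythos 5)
Your proof is correct and follows essentially the same route as the paper: both rest on the observation that $adv(c_i,c_j)+adv(c_j,c_i)=n$ because every vote is a strict linear order, so each $netadv$ value has the parity of $n$. The only difference is cosmetic—you compute $netadv(c_i,c_j)=2\,adv(c_i,c_j)-n\equiv n \pmod 2$ directly, whereas the paper splits into cases on the parity of $n$.
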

\begin{proof}

Consider two blocs of votes where each bloc takes one side
in a pairwise election.  Let the sizes of the blocs be $x$ and $y$
such that $x + y = n$.

\noindent
If n is even: 
\begin{itemize}
\item Then $x,y$ are either both odd or either both even since the sum
  components of an even number are either both even, or both odd.
\item The difference between two even numbers or two odd numbers is
  always even.
\end{itemize}
If n is odd:
\begin{itemize}
\item Then $x,y$ are either odd and even, or even and odd,
  respectively, since the sum components of an odd number are always a
  combination of even and odd.
\item The difference between an even and odd number is always odd.
\end{itemize}
Thus, all the $netadv$ values are either all even or all odd.
\end{proof}

We can see that the $netadv$ function corresponds to elements of a
preference pattern: $netadv(c_{i}, c_{j}) > 0$ corresponds to
$c_{i}Pc_{j}$, $netadv(c_{i}$, $c_{j}) = 0$ corresponds to
$c_{i}Ic_{j}$, and  $netadv(c_{i}$, $c_{j}) < 0$ corresponds to
$c_{j}Pc_{i}$.  However, the key difference between the $netadv$ functions
and preference-pattern elements is that $netadv$ has
scores, which we must factor into our construction.

Since we construct two votes for each pair of candidates, the problem
of applying McGarvey's method to a $netadv$ relation with an even
score is trivial---for each of the two preference orderings
constructed we simply have $n/2$ such votes.  In fact, given a
$netadv$ function, if one $netadv$ score is even, then (1) the number
of votes will be even, and (2) every $netadv$ score in that set will
be even.  The converse also applies:
if one $netadv$ score in a given set is odd, then the number of votes
will be odd, and every $netadv$ score in the given set will be odd.

Applying McGarvey's method to an odd set of $netadv$ scores requires a
slight tweak. We first must select some arbitrary ordering of the
candidates.  Then for any $netadv$ score $s = netadv(a,b)$ where $a$
precedes $b$ in the ordering, we construct $s-1$ (net) votes (or units
of score) exactly as in the case where the scores are all even (i.e.,
two preference orderings, with $(s-1)/2$ such votes for each one). In
the case of $s = netadv(b,a)$ where $a$ precedes $b$, we will instead
create $(s+1)/2$ pairs of votes to give a score of $s+1$ for $b$ over
$a$.  To obtain the last points, we construct one preference ordering
corresponding to the previously chosen ordering of the
candidates. This will give one net vote to every pair $a,b$ where $a$
precedes $b$, and minus one vote where $b$ precedes $a$.  Thus we
achieve the desired odd numbers for each $netadv$ input.

\paragraph{Number of Constructed Votes}

In the above two cases ($netadv$ scores being even or odd) we can see
the upper bound on the number of votes constructed is one for every
unit of score across all $netadv$ values.  The size of the set of
votes will be bounded by $\sum_{c_i,c_j \in C} |netadv(c_{i},
c_{j})|$.  Thus given a $netadv$ function with bounded value, we can
construct a reasonably small set of votes.





\section{Graph Representation of the {\boldmath $netadv$} Function}
Consider a complete directed antisymmetric graph $G(V_G,E)$ where
$V_G$ is the set of vertices and $E$ is the set of directed weighted
edges. We can trivially see that there will be $m(m-1)/2$ directed
edges, where $m = ||V_G||$.  Recall that this is the same as the
number of distinct scores required to represent a $netadv$
function (where $m$ is the number of candidates).  Thus we can
represent a $netadv$ function with such a graph $G$ where $V_G$ is the
set of candidates $C$ and $E$ encodes the $netadv$ function.
Similarly, given such a graph, we can obtain an equivalent $netadv$
function.

\bibliographystyle{alpha}
\bibliography{grypiotr2006}
\end{document}